\tikzstyle{gate}=[shape=rectangle, text height=1.5ex, text depth=0.25ex, yshift=0.5mm, fill=white, draw=black, minimum height=3mm, yshift=-0.5mm, minimum width=3mm, font={\small}, tikzit category=circuit, inner sep=2pt]
\tikzstyle{big gate}=[shape=rectangle, text height=1.5ex, text depth=0.25ex, yshift=0.5mm, fill=white, draw=black, minimum height=10mm, yshift=-0.5mm, minimum width=5mm, font={\small}, tikzit category=circuit]
\tikzstyle{Z dot}=[inner sep=0mm, minimum size=2mm, shape=circle, draw=black, fill={rgb,255: red,221; green,255; blue,221}, tikzit category=zx]
\tikzstyle{Z phase dot}=[minimum size=5mm, font={\footnotesize\boldmath}, shape=rectangle, rounded corners=2mm, inner sep=0.2mm, outer sep=-2mm, scale=0.8, tikzit shape=circle, draw=black, fill={rgb,255: red,221; green,255; blue,221}, tikzit draw=blue, tikzit category=zx]
\tikzstyle{X dot}=[Z dot, shape=circle, draw=black, fill={rgb,255: red,255; green,136; blue,136}, tikzit category=zx]
\tikzstyle{X phase dot}=[Z phase dot, tikzit shape=circle, tikzit draw=blue, fill={rgb,255: red,255; green,136; blue,136}, font={\footnotesize\boldmath}, tikzit category=zx]
\tikzstyle{hadamard}=[fill=yellow, draw=black, shape=rectangle, inner sep=0.6mm, minimum height=1.5mm, minimum width=1.5mm, tikzit category=zx]
\tikzstyle{paulibox}=[fill={rgb,255: red,221; green,221; blue,255}, draw=black, shape=rectangle, inner sep=0.6mm, minimum height=5mm, minimum width=5mm, font={\footnotesize}, text height=1.5ex, text depth=0.25ex, tikzit category=zx]
\tikzstyle{vertex}=[inner sep=0mm, minimum size=1mm, shape=circle, draw=black, fill=black, tikzit category=misc]
\tikzstyle{vertex set}=[inner sep=0mm, minimum size=1mm, shape=circle, draw=black, fill=white, font={\footnotesize\boldmath}, tikzit category=misc]
\tikzstyle{small black dot}=[fill=black, draw=black, shape=circle, inner sep=0pt, minimum width=1.2mm, tikzit category=circuit]
\tikzstyle{cnot ctrl}=[fill=black, draw=black, shape=circle, inner sep=0pt, minimum width=1.2mm, tikzit category=circuit]
\tikzstyle{cnot targ}=[fill=white, draw=white, shape=circle, tikzit category=circuit, label={center:$\oplus$}, inner sep=0pt, minimum width=2.1mm, tikzit fill={rgb,255: red,102; green,204; blue,255}, tikzit draw=black]
\tikzstyle{ket}=[fill=white, draw=black, shape=regular polygon, regular polygon sides=3, regular polygon rotate=-30, scale=0.7, inner sep=1pt, tikzit category=circuit, tikzit shape=rectangle, tikzit fill=green]
\tikzstyle{bra}=[fill=white, draw=black, shape=regular polygon, regular polygon sides=3, regular polygon rotate=30, scale=0.7, inner sep=1pt, tikzit category=circuit, tikzit shape=rectangle, tikzit fill=red]
\tikzstyle{scalar}=[shape=rectangle, text height=1.5ex, text depth=0.25ex, yshift=0.5mm, fill=white, draw=black, minimum height=5mm, yshift=-0.5mm, minimum width=5mm, font={\small}]
\tikzstyle{clabel}=[fill=white, draw=none, shape=rectangle, tikzit fill={rgb,255: red,56; green,255; blue,242}, font={\footnotesize}, inner sep=1pt, tikzit category=labels]
\tikzstyle{empty diagram}=[draw={gray!40!white}, dashed, shape=rectangle, minimum width=1cm, minimum height=1cm, tikzit category=misc]
\tikzstyle{amap}=[fill=white, draw=black, shape=NEbox, tikzit category=asymmetric, tikzit fill=yellow, tikzit shape=rectangle]
\tikzstyle{amap conj}=[fill=white, draw=black, shape=NWbox, tikzit category=asymmetric, tikzit fill=green, tikzit shape=rectangle]
\tikzstyle{amap adj}=[fill=white, draw=black, shape=SEbox, tikzit category=asymmetric, tikzit fill=red, tikzit shape=rectangle]
\tikzstyle{amap trans}=[fill=white, draw=black, shape=SWbox, tikzit category=asymmetric, tikzit fill=orange, tikzit shape=rectangle]
\tikzstyle{astate}=[fill=white, draw=black, shape=NEtriangle, tikzit category=asymmetric, tikzit shape=circle, tikzit fill=yellow]
\tikzstyle{astate conj}=[fill=white, draw=black, shape=NWtriangle, tikzit category=asymmetric, tikzit shape=circle, tikzit fill=green]
\tikzstyle{astate adj}=[fill=white, draw=black, shape=SEtriangle, tikzit category=asymmetric, tikzit shape=circle, tikzit fill=red]
\tikzstyle{astate trans}=[fill=white, draw=black, shape=SWtriangle, tikzit category=asymmetric, tikzit shape=circle, tikzit fill=orange]
\tikzstyle{white dot}=[inner sep=0mm, minimum size=2mm, shape=circle, draw=black, fill={rgb,255: red,250; green,250; blue,250}]
\tikzstyle{white phase dot}=[minimum size=5mm, font={\footnotesize\boldmath}, shape=rectangle, rounded corners=2mm, inner sep=0.2mm, outer sep=-2mm, scale=0.8, tikzit shape=circle, draw=black, fill={rgb,255: red,250; green,250; blue,250}, tikzit draw=blue]
\tikzstyle{hbox}=[shape=rectangle, text height=2mm, fill={rgb,255: red,255; green,235; blue,61}, draw=black, minimum height=2mm, minimum width=2mm, font={\small}, tikzit category=zh, inner sep=0pt, rounded corners=0.5mm]
\tikzstyle{Z dot (zh)}=[inner sep=0mm, minimum size=2mm, shape=circle, draw=black, fill={rgb,255: red,250; green,250; blue,250}, tikzit category=zh]
\tikzstyle{X dot (zh)}=[Z dot, shape=circle, draw=black, fill={rgb,255: red,193; green,193; blue,193}, tikzit category=zh]
\tikzstyle{triangle}=[fill={rgb,255: red,255; green,136; blue,136}, draw=black, shape=isosceles triangle, isosceles triangle apex angle=60, minimum size=2.5mm, inner sep=0mm]
\tikzstyle{labelled hbox}=[shape=rectangle, text height=1.75ex, text depth=0.5ex, fill={rgb,255: red,255; green,235; blue,61}, draw=black, minimum height=3mm, minimum width=4mm, font={\small}, tikzit category=zh, inner sep=1.3pt, rounded corners=0.5mm]
\tikzstyle{Z phase dot (zh)}=[Z phase dot, tikzit shape=circle, tikzit draw=blue, fill={rgb,255: red,250; green,250; blue,250}, font={\footnotesize\boldmath}, tikzit category=zh]
\tikzstyle{X phase dot (zh)}=[Z phase dot, tikzit shape=circle, tikzit draw=blue, fill={rgb,255: red,193; green,193; blue,193}, font={\footnotesize\boldmath}, tikzit category=zh]
\tikzstyle{W node}=[fill=black, draw=black, shape=regular polygon, regular polygon sides=3, minimum size=2mm]
\tikzstyle{Z dot (zw)}=[fill=white, draw=black, shape=circle, minimum width=1.2mm, inner sep=0pt]
\tikzstyle{Z phase dot XL}=[Z phase dot, fill={rgb,255: red,250; green,250; blue,250}, draw=black, shape=circle, tikzit draw={rgb,255: red,191; green,0; blue,64}, tikzit shape=circle, font={\large\boldmath}, inner sep=0.0mm]
\tikzstyle{hadamard edge}=[-, dashed, dash pattern=on 2pt off 0.5pt, thick, draw={rgb,255: red,68; green,136; blue,255}]
\tikzstyle{box edge}=[-, dashed, dash pattern=on 2pt off 0.5pt, thick, draw={rgb,255: red,203; green,192; blue,225}]
\tikzstyle{brace edge}=[-, tikzit draw=blue, decorate, decoration={brace,amplitude=1mm,raise=-1mm}]
\tikzstyle{diredge}=[->, thick]
\tikzstyle{double edge}=[-, double, shorten <=-1mm, shorten >=-1mm, double distance=2pt]
\tikzstyle{gray edge}=[-, {gray!60!white}]
\tikzstyle{pointer edge}=[->, very thick, gray]
\tikzstyle{boldedge}=[-, line width=1.0pt, shorten <=-0.17mm, shorten >=-0.17mm]
\tikzstyle{bidir edge}=[<->, very thick, draw={rgb,255: red,191; green,191; blue,191}]
\tikzstyle{purple edge}=[->, thick, draw={rgb,255: red,225; green,117; blue,216}]
\tikzstyle{green edge}=[->, thick, draw={rgb,255: red,167; green,231; blue,137}]
\tikzstyle{orange edge}=[->, thick, draw={rgb,255: red,245; green,170; blue,63}]
\tikzstyle{blue edge}=[->, thick, draw={rgb,255: red,68; green,136; blue,255}]
\tikzstyle{any edge}=[->, thick, draw=cyan]
\tikzstyle{red edge}=[->, thick, draw={rgb,255: red,255; green,136; blue,136}]
\tikzstyle{bidiredge}=[<->, thick]
\tikzstyle{dashed diredge}=[->, dashed, dash pattern=on 1pt off 0.5pt]
\tikzstyle{bidashed diredge}=[<->, dashed, dash pattern=on 1pt off 0.5pt]
\tikzstyle{gray fill}=[-, fill={rgb,255: red,234; green,234; blue,234}, draw=black]
\tikzstyle{white fill}=[-, fill=white]
\newtheorem{theorem}{Theorem}
\newtheorem{task}{Task}
\newtheorem{lemma}{Lemma}
\newcommand{\trace}{\operatorname{Tr}}
\title{Any Clifford+T circuit can be controlled with constant T-depth overhead}
\author{Isaac H. Kim}
\affil{Department of Computer Science, University of California, Davis, CA, 95616, USA}
\author{Tuomas Laakkonen}
\affil{Plasma Science and Fusion Center, Massachusetts Institute of Technology, Cambridge, MA, 02139, USA}
\begin{document}

\maketitle

\begin{abstract}
    \noindent Since an $n$-qubit circuit consisting of CNOT gates can have up to $\Omega(n^2/\log n)$ CNOT gates, it is natural to expect that $\Omega(n^2 / \log n)$ Toffoli gates are needed to apply a controlled version of such a circuit. We show that the Toffoli count can be reduced to at most $n$. The Toffoli depth can also be reduced to $O(1)$, at the cost of $2n$ Toffoli gates, even without using any ancilla or measurement. In fact, using a measurement-based uncomputation, the Toffoli depth can be further reduced to $1$. From this, we give two corollaries: any controlled Clifford circuit can be implemented with $O(1)$ T-depth, and any Clifford+T circuit with T-depth $D$ can be controlled with T-depth $O(D)$, even without ancillas. As an application, we show how to catalyze a rotation by any angle up to precision $\epsilon$ in T-depth exactly $1$ using a universal $\lceil\log_2(8/\epsilon)\rceil$-qubit catalyst state.
\end{abstract}

\section{Introduction}
\label{sec:introduction}

\begin{table}[t]
    \centering
    \begin{tabular}{lllllll}
    \toprule
        Reference & Circuit Type & Non-Clifford Depth & Non-Clifford Count & \multicolumn{2}{c}{Optimal?} & Ancillas \\ 
    \midrule
        Section \ref{subsec:cnottoffolicount} & CNOT & $n - c$ & $n - c$ & No & $\approx$ & - \\
        Section \ref{subsec:cnotdepthconstant} & CNOT & $12$ & $2n$ & $\approx$ & $\approx$ & - \\
        Section \ref{sec:cnot_toffoli_one} & CNOT & $1$ & $n$ & Yes & $\approx$ & $2n - 1$ \\
        Section \ref{subsec:controlledclifford} & Clifford & $O(1)$ & $O(n)$ & $\approx$ & No & - \\
        Section \ref{subsec:controlledunitary} & Clifford+T & $O(D)$ & $O(C + n)$ & $\approx$ & No  & - \\
    \bottomrule
    \end{tabular}
    \caption{A summary of the results presented in this paper, indexed by the type of circuit they apply to. For CNOT circuits, the non-Clifford Depth/Count refers to Toffoli gates, while for Clifford and Clifford+T circuits, it refers to T-gates. The two columns beneath the `Optimal?' heading refer to whether the Non-Clifford Depth and Count are optimized, respectively, where the symbol `$\approx$' indicates approximate optimality (that is, always within a constant factor of optimal). We define $n$ as the number of qubits in each circuit, $D$ and $C$ as the T-depth and T-count of the uncontrolled circuit, and $c \geq 0$ is a factor depending on the specific controlled CNOT circuit being synthesized.}
    \label{tab:results}
\end{table}

Fault-tolerant computation schemes based on quantum error-correcting codes will be necessary to achieve large-scale quantum computation. In many such schemes, the available set of quantum gates is not fully general but consists of the Clifford gates together with a non-Clifford gate, such as the $T$-gate or Toffoli gate. Due to the Eastin-Knill theorem~\cite{Eastin2009}, it is not possible to implement both Clifford and non-Clifford gates within the same error correcting code. A standard workaround for this is to implement Clifford operations within the code and rely on state teleportation for the non-Clifford gates. In this method, magic states are prepared outside of the code and applied via a measurement and classically controlled correction operation. Historically, non-Clifford gates have been the most expensive (in terms of time and space) to implement \cite{Litinski2019magic}, and so minimizing them is an important goal. This leads us to the first task we consider in this paper:

\begin{task}
    Given a description of a quantum circuit, produce an equivalent circuit with a reduced non-Clifford gate count.
    \label{task:count}
\end{task}

Furthermore, while in many topological codes (such as the surface code) Clifford computations and magic state preparation can be arbitrarily parallelized, given a large enough supply of qubits \cite{fowler2013time,litinski2019game}, this is not always the case for classically-controlled operations used to inject magic states, which must be processed serially if a measurement basis must be updated based on previous measurement outcomes. Therefore, minimizing the total time required in the regime of many available qubits, and hence maximal parallelization, is essentially the same as minimizing the depth of non-Clifford gates. While minimizing the non-Clifford depth had not been the top priority in the literature due to the large space-time cost of such gates~\cite{litinski2019game,Litinski2019magic}, there has been recent progress in reducing the spacetime cost of logical magic states~\cite{Itogawa_2025,gidney2024magic,Daguerre:2024gjd,sahay2025foldtransversalsurfacecodecultivation,claes2025cultivatingtstatessurface,Vaknin:2025pbp,Chen:2025imz,Daguerre:2025boq,Rosenfeld:2025xvf}. Thus there may be practical motivations to minimize non-Clifford depth; see also Ref.~\cite{mcardle2025fastcuriousacceleratefaulttolerant}. This leads us to consider a second task:

\begin{task}
    Given a description of a quantum circuit, produce an equivalent circuit with a reduced non-Clifford gate depth.
    \label{task:depth}
\end{task}

We tackle these two tasks for the special case of controlled circuits, which are defined by:
$$ c(U) [\ket{c} \otimes \ket{\psi}] = \ket{c} \otimes U^c \ket{\psi} $$
Controlled circuits are ubiquitous in quantum algorithm -- for example, in the Hadamard test, quantum phase estimation, block-encoding constructions, and other applications discussed in Section \ref{sec:applications} -- but a naive gate-by-gate compilation yields very poor scaling in terms of non-Clifford gate count and depth. This overhead is most pronounced when considering Clifford circuits, as their controlled versions must be non-Clifford. In this work, we find constructions for controlled Clifford circuits that are asymptotically better in both non-Clifford count and depth than naive compilation. That this is possible is not necessarily surprising, but explicit constructions were not previously known. Indeed, because of their unique algebraic structure, Clifford circuits have a history of admitting good solutions to various circuit synthesis tasks (for instance \cite{maslov2022depth,murphy2023global,meijer2023dynamic,goubault2022decoding,kutin2007computation}), in contrast with more general quantum circuits, for which many of these tasks are difficult in a complexity-theoretic sense \cite{vandewetering2024optimising}.

In Section \ref{sec:cnots}, we start by considering circuits composed only of CNOT gates, which are a subset of Clifford circuits. Since an arbitrary CNOT circuit may contain up to $\Omega(n^2 / \log n)$ gates \cite{patel2003efficient}, it is natural to expect that $\Omega(n^2 / \log n)$ non-Clifford gates are needed to implement a controlled version. However, we show that this is not the case, and that at most $n - c$ Toffoli gates are needed, where $c \geq 0$ is a factor that depends on the specific circuit; in fact, we prove that this is at most 50\% more Toffoli gates than the optimal circuit. 

Furthermore, the Toffoli-depth can be reduced all the way to $O(1)$ while preserving the approximate optimality in terms of Toffoli-count; without ancilla qubits, we can reduce the Toffoli-depth to a constant, and if we are allowed $2n - 1$ clean ancillas, we can reduce the $T$-depth to exactly $1$, which is strictly optimal. This requires the use of measurement-based uncomputation \cite{Gidney2018halvingcostof}, but we ensure that the classically-controlled correction operations can be parallelized. 

Then in Section \ref{sec:gen}, based on the normal form for Clifford circuits of \cite{aaronson2004improved}, we extend this to general Clifford circuits, to show that controlled Clifford circuits may be implemented with constant $T$-depth, and finally to all Clifford+T circuits. In this case, we show that, given an $n$-qubit circuit with $T$-depth $D$ and $T$-count $C$, we can produce a controlled version of it with $T$-depth $O(D)$ and $T$-count $O(C + n)$. A summary of these results is shown in Table \ref{tab:results}.

Finally, in Section \ref{sec:applications} we consider concrete applications of our constructions. In particular, multivariate trace estimation problems \cite{Quek2024} and, more significantly, arbitrary-angle rotation synthesis by catalysis. Since arbitrary-angle rotation gates are not usually available in fault-tolerant gate sets, they must be approximated from Clifford+T operations. While it is known how to approximate any particular rotation gate to precision $\epsilon$ essentially optimally with $O(\log \frac{1}{\epsilon})$ gates \cite{ross2016optimal}, applying this to approximate a circuit of $k$ rotation gates to precision $\epsilon$ requires a $T$-depth of $O(k\log\frac{k}{\epsilon})$ (because a precision of $\frac{\epsilon}{k}$ in each gate is required). Thus, this method has a more-than-constant overhead in $T$-depth that we would hope to avoid. 

By contrast, in \cite{kim2025catalytic} it was shown that by using a catalyst state (which is an auxiliary state that must be prepared but is not consumed), the $T$-depth can be reduced to $3k + O(\mathrm{polylog}(\frac{k}{\epsilon}))$ with $O(\log^2\frac{k}{\epsilon})$ ancilla qubits. Using our controlled CNOT construction, we improve this to $k + \tilde{O}(\log\frac{k}{\epsilon})$ with $O(\log\frac{k}{\epsilon})$ ancilla qubits. This gives an essentially optimal method of compiling Clifford+$R_Z$ circuits to the Clifford+T gate set, in terms of $T$-depth up to an additive logarithmic factor, and resolves an open question of \cite{kim2025catalytic}. A comparison with alternative methods is given in Table \ref{tab:catalysis}.

\begin{table}[t]
    \centering
    \begin{tabular}{llll}
    \toprule
        Algorithm & $T$-depth & $T$-count & Qubits \\
    \midrule
        Solovay-Kitaev \cite{dawson2006solovay} & $O(k\log^{3.97}\frac{k}{\epsilon})$ & $O(k\log^{3.97}\frac{k}{\epsilon})$ &  $n$  \\
        Ross-Selinger \cite{ross2016optimal} & $3k\log_2\frac{k}{\epsilon} + \tilde{O}(1)$ &  $3k\log_2\frac{k}{\epsilon} + \tilde{O}(1)$  &  $n$  \\
        Gidney \cite{Gidney2018halvingcostof} & $O(k\log\log\frac{k}{\epsilon} + \log^2\frac{k}{\epsilon})$ & $O(k\log\frac{k}{\epsilon} + \log^2\frac{k}{\epsilon})$ &  $n + O(\log\frac{k}{\epsilon})$ \\
        Kim \cite{kim2025catalytic} & $3k + O(\mathrm{polylog}\frac{k}{\epsilon})$ & $O(k\log^2\frac{k}{\epsilon} + \mathrm{polylog}\frac{k}{\epsilon})$ & $n + O(\log^2\frac{k}{\epsilon})$\\
        \emph{This Paper} &  $k + \tilde{O}(\log\frac{k}{\epsilon})$ & $O(k\log\frac{k}{\epsilon}) + \tilde{O}(\log\frac{k}{\epsilon})$ & $n + O(\log\frac{k}{\epsilon})$\\
    \bottomrule
    \end{tabular}
    \caption{A comparison of arbitrary-angle $Z$-rotation synthesis schemes. Specifically, given an $n$-qubit Clifford+$R_Z$ circuit with $k$ $Z$-rotation gates, the $T$-depths, $T$-counts, and number of qubits required to transpile to the Clifford+T gateset are listed for several algorithms. Our method (discussed in Section \ref{sec:catalysis}) is the most competitive for $T$-depth, and simultaneously asymptotically optimal in $T$-count, but is outperformed in qubit count by ancilla-free methods based on individual gate approximations. Note that $\tilde{O}(.)$ hides polynomial factors in $\log\log\frac{1}{\epsilon}$.} 
    \label{tab:catalysis}
\end{table}

\section{Preliminaries}
\label{sec:preliminaries}

Here, we set up our notation and review pertinent facts about Clifford and non-Clifford gates. Cliffords are unitaries that map Paulis to Paulis under conjugation. It is a well-known fact that such unitaries form a group and that the group is generated by $H, S,$ and CNOT. Throughout this paper, we will consider Clifford circuits acting on $n$ qubits.

Various standard forms for such Clifford circuits have been proposed, for instance \cite{aaronson2004improved, duncan2020graph, maslov2018shorter}. Although these forms differ in details, they share a common theme: that an arbitrary $n$-qubit Clifford is decomposed into $O(1)$ layers, each of which consists of gates of the same type. The gates are chosen from the standard set $\{H, S, \text{CNOT}, \text{CZ} \}$. For the purpose of achieving Tasks~\ref{task:count} and \ref{task:depth}, the layers consisting of single-qubit gates pose little difficulty. However, less obvious is how to optimize the layers consisting of two-qubit gates, such as CNOTs. As such, we will primarily focus on CNOT circuits.

\subsection{CNOT Circuits as Binary Matrices}
\label{subsec:cnot_layer_binary_matrix}

CNOT circuits can be viewed as invertible binary matrices acting on a length-$n$ binary vectors. Let $\vec{x}\in \mathbb{F}_2^n$ and let $U$ be a unitary that represents a CNOT circuit. There is an $n\times n$ matrix $A$ such that:
\begin{equation}
    U|\vec{x}\rangle = |A\vec{x}\rangle \label{eq:cnot_circuit_binary_matrix}
\end{equation}
We call this matrix as the \emph{parity matrix}. If we apply a CNOT gate on qubits $i$ and $j$, to circuit $C$ on the left, so that $C' = \mathrm{CNOT}_{ij} \cdot C$, the parity matrix changes by adding row $i$ to $j$:
$$
A_{C'} = \begin{pmatrix}
    \ddots \\
    & 1 & \cdots & 0  \\
    & \vdots & & \vdots \\
    & 1 & \cdots & 1 \\
    & & & & \ddots
\end{pmatrix} A_C = \begin{pmatrix}
    & \vdots & \\
    (A_C)_{i1} & \cdots & (A_C)_{in} \\
    & \vdots & \\
    (A_C)_{j1} \oplus (A_C)_{i1} & \cdots & (A_C)_{jn} \oplus (A_C)_{in} \\
    &\vdots & 
\end{pmatrix}
$$
Likewise, applying a CNOT gate from the right corresponds to adding column $i$ to column $j$. Moreover, appending two circuits corresponds to multiplying both their unitaries and their parity matrices:
\ctikzfig{parity-matrix-multiply}
Composing two CNOT circuits in parallel corresponds to the direct sum of their parity matrices and the tensor product of their unitaries:
\ctikzfig{parity-matrix-direct-sum}
Thus, all parity matrices are invertible. Given any CNOT circuit, deriving the corresponding parity matrix can be done easily by starting with the identity matrix (corresponding to the empty circuit) and applying these row operations incrementally. In the other direction, a CNOT circuit can be synthesized from any invertible $n \times n$ matrix over $\mathbb{F}_2$ using Gaussian elimination. 

\subsection{Controlled Unitaries}

For any unitary $U$ on $n$ qubits, we define the controlled unitary $c(U)$ on $n+1$ qubits by:
$$
c(U)(\ket{0} \otimes I_n) = \ket{0} \otimes I_n \qquad c(U)(\ket{1} \otimes I_n) = \ket{1} \otimes U
$$
It is denoted in circuit notation as
\ctikzfig{controlled-circuit-notation}
and we will use the following identities:
$$
c(UV) = c(U)c(V) \qquad c(UVU^{-1}) = (I \otimes U)c(V)(I \otimes U^{-1})
$$

\section{Controlled CNOT Circuits}
\label{sec:cnots}

\subsection{Approximately optimal Toffoli-count}
\label{subsec:cnottoffolicount}

Our construction for CNOT circuits is based on a particular matrix decomposition called the \emph{generalized Jordan normal form}. This is a decomposition that can be computed for square matrices over any field, using only arithmetic within that field, which directly generalizes the Jordan normal form for complex matrices. It is a normal form in the sense that two matrices $A$ and $B$ have the same generalized Jordan normal form if and only if they are similar -- that is, $A = SBS^{-1}$ for some invertible $S$ -- so that $A$ and $B$ represent the same action in different bases. In particular, we have the following.

\begin{theorem}[Generalized Jordan Normal Form]
    \label{thm:rationalcanonicalform}
    Given an $n \times n$ matrix $A$ over an arbitrary field $\mathbb{F}$, there exists $k$ polynomials $f_1, f_2, \dots, f_k \in \mathbb{F}[x]$, and an invertible matrix $S \in GL_n(\mathbb{F})$ such that
    $$ SAS^{-1} = \begin{pmatrix}
        C_{f_1} \\
        & C_{f_2} \\
        & & \ddots \\
        & & & C_{f_k}
    \end{pmatrix} \qquad C_{f_i} = \begin{pmatrix}
        & & & & -a_{i0} \\
        1 & & & & -a_{i1} \\
        & 1 & & & -a_{i2} \\
        & & \ddots & & \vdots \\
        & & & 1 & -a_{i,d_i-1}
    \end{pmatrix}$$
    where $C_{f_i}$ is the companion matrix corresponding to $f_i = a_{i0} + a_{i1}x + a_{i2}x^2 + \cdots + a_{i,d_i-1}x^{d_i-1} + x^{d_i}$. Moreover, we have $f_i(x) = q_i(x)^k$ where each $q_i(x)$ is irreducible and each $f_i$ is called an \emph{elementary divisor} of $A$.
\end{theorem}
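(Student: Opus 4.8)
The plan is to recognize this statement as the primary (elementary-divisor) form of the structure theorem for finitely generated modules over a principal ideal domain, specialized to the PID $\mathbb{F}[x]$. First I would turn $V := \mathbb{F}^n$ into an $\mathbb{F}[x]$-module by letting the indeterminate $x$ act as $A$, that is, $p(x) \cdot v := p(A) v$ for $p \in \mathbb{F}[x]$ and $v \in \mathbb{F}^n$. Since $\mathbb{F}^n$ is finite-dimensional over $\mathbb{F}$ it is finitely generated over $\mathbb{F}[x]$, and it is torsion because the minimal polynomial of $A$ annihilates every vector; in particular there is no free summand, as $\mathbb{F}[x]$ is infinite-dimensional over $\mathbb{F}$.

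Next, because $\mathbb{F}[x]$ is a PID, I would invoke the structure theorem in its primary form to write $V \cong \bigoplus_{i=1}^{k} \mathbb{F}[x]/(q_i^{e_i})$ as $\mathbb{F}[x]$-modules, where each $q_i$ is monic irreducible and $e_i \geq 1$; setting $f_i := q_i^{e_i}$ produces the elementary divisors. On a single cyclic summand $\mathbb{F}[x]/(f_i)$, the residues of $1, x, \dots, x^{d_i-1}$ with $d_i = \deg f_i$ form an $\mathbb{F}$-basis, and multiplication by $x$ sends $x^j \mapsto x^{j+1}$ for $j < d_i - 1$ and $x^{d_i - 1} \mapsto -\sum_{j=0}^{d_i-1} a_{ij} x^j$, using $f_i = 0$ in the quotient. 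Written in this basis, the operator is exactly the companion matrix $C_{f_i}$ displayed in the statement, and taking the direct sum over $i$ gives the block-diagonal form; the dimension count $\sum_i d_i = n$ is automatic since the isomorphism is $\mathbb{F}$-linear.

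To extract $S$, note the module isomorphism is in particular an invertible $\mathbb{F}$-linear map $\Phi \colon V \to \bigoplus_i \mathbb{F}[x]/(f_i)$ that intertwines the action of $A$ on the left with the block-diagonal companion operator on the right. Collecting the preimages under $\Phi$ of the companion bases above into a single ordered basis of $\mathbb{F}^n$, and letting $S$ be the change-of-basis matrix relating this basis to the standard basis, yields $S \in GL_n(\mathbb{F})$ with $SAS^{-1}$ equal to the stated block-diagonal matrix.

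The only genuinely substantive input is the PID structure theorem, and that is where I expect the real work to sit; if a self-contained argument is preferred over a citation, I would supply it constructively via the Smith normal form of the characteristic matrix $xI - A$ over $\mathbb{F}[x]$. Row- and column-reduction by the Euclidean algorithm — polynomial division, which uses only arithmetic in $\mathbb{F}$ — brings $xI - A$ to $\mathrm{diag}(1, \dots, 1, d_1, \dots, d_r)$ with $d_1 \mid \dots \mid d_r$ the invariant factors; factoring each $d_j$ into powers of monic irreducibles yields the elementary divisors, while $\prod_j d_j = \det(xI - A)$ forces $\sum_i d_i = n$. This route proves existence and simultaneously justifies the computability claim made in the surrounding text. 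The remaining steps are bookkeeping: checking that the chosen residues generate each cyclic summand, and that the companion action is recorded in the correct sub-diagonal orientation, so that the displayed $C_{f_i}$ matches the convention $x^{d_i} = -\sum_j a_{ij} x^j$ rather than its transpose.
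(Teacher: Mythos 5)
Your proposal is correct and is essentially the standard proof that the paper defers to: the paper's ``proof'' is only a citation to the literature (Birkhoff--Mac Lane), and the argument you give --- viewing $\mathbb{F}^n$ as a torsion $\mathbb{F}[x]$-module, invoking the primary form of the structure theorem over a PID, and reading off the companion matrices $C_{f_i}$ in the bases $1, x, \dots, x^{d_i-1}$ of the cyclic summands --- is exactly the standard route such references take. Your closing remark that the Smith normal form of $xI - A$ makes the decomposition constructive using only field arithmetic is a worthwhile addition, since the paper relies on that computability when applying the theorem to parity matrices over $\mathbb{F}_2$.
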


\begin{proof}
    There are many proofs in the literature, for example \cite[Chapter 10, Theorem 24]{birkhoff2017survey}.
\end{proof}

Since each $C_{f_i}$ is sparse, this immediately shows the remarkable fact that \emph{every square matrix is similar to a sparse matrix}. This applies not just for diagonalizable matrices over $\mathbb{C}$, where this is given by the eigendecomposition, but for arbitrary matrices over arbitrary fields. We will exploit this by applying this decomposition to the parity matrix of a CNOT circuit as a matrix over $\mathbb{F}_2$. We will show that the resulting sparse matrix can be synthesized with few CNOTs, and hence synthesize controlled CNOT circuits with few Toffoli gates. Firstly, we consider each $C_{f_i}$ individually, with a special case for $f_i(x) = (1 + x)^{d_i}$.

\begin{lemma}
    \label{thm:controlledcompanionmatrix}
    For any polynomial $f(x) \in \mathbb{F}_2[x]$ of degree $d \geq 2$ with $f(0) = 1$, the controlled CNOT circuit with parity matrix $C_f$ can be implemented with $d$ Toffoli gates.
\end{lemma}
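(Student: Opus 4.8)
The plan is to recognize $C_f$ as the transition matrix of a linear-feedback shift register, i.e.\ as multiplication by $x$ in the ring $R = \mathbb{F}_2[x]/(f)$ expressed in the basis $1, x, \dots, x^{d-1}$; the hypothesis $f(0)=1$ is exactly the statement that $x$ is invertible in $R$, so $C_f$ is a genuine (invertible) parity matrix. The single most useful tool will be the conjugation identity $c(ABA^{-1}) = (I\otimes A)c(B)(I\otimes A^{-1})$: since the outer factors $A, A^{-1}$ are uncontrolled CNOT circuits, they cost no Toffoli gates, so it suffices to control \emph{any} matrix $B$ similar to $C_f$. Moreover, controlling a CNOT yields one Toffoli and controlling a SWAP yields one Fredkin (also one Toffoli), so the cost of $c(C_f)$ is at most the number of CNOT/SWAP factors needed to write some $B \sim C_f$.

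The naive route is the factorization $C_f = S\cdot E$ into the cyclic shift $S$ (a single $d$-cycle, $d-1$ controlled SWAPs) and the feedback $E$ (one CNOT per nonzero coefficient $a_j$, $1\le j\le d-1$), which controls to $(d-1) + t$ Toffolis where $t$ is the number of taps. This already gives $\le d$ when $t\le 1$, but overshoots for dense $f$, so the real work is to remove the dependence on $t$. I would do this by inducting on $d$ through the module structure of $R$, splitting on whether $(x+1)\mid f$. If $(x+1)\mid f$, write $f=(x+1)g$; then $(x+1)R \cong \mathbb{F}_2[x]/(g)$ is an $x$-invariant submodule with one-dimensional quotient, so in an adapted basis $C_f$ is similar to $\left(\begin{smallmatrix} C_g & v \\ 0 & 1\end{smallmatrix}\right)$. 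Its controlled version maps $(\vec u, w)\mapsto (C_g^c\vec u + c\,w\,v,\; w)$, which I implement as $c(C_g)$ on the first $d-1$ qubits plus $|v|$ Toffolis for the $c\,w\,v$ coupling. The key simplification is that $v$ is only defined modulo $\mathrm{im}(C_g - I)$, so I can replace it by a representative of weight $\le 1$ (and when $(x+1)\nmid g$ this image is everything, so $v$ can be cleared entirely); the induction then gives $(d-1) + 1 = d$ Toffolis.

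For the base of the induction, $(x+1)\nmid f$, which is equivalent to $f(1)=1$ and hence to $C_f - I$ being invertible (since $\det(C_f - I) = f(1)$ over $\mathbb{F}_2$). Here there is no $(x+1)$-quotient to peel; for a prime-power $f=q^k$ I would instead use the $(q)$-adic filtration of $R$, whose graded pieces are copies of $\mathbb{F}_2[x]/(q)$ on which $x$ acts as $C_q$, and whose coupling vectors can be cleared because $C_q - I$ is invertible, reducing everything to the case of irreducible $f$. For irreducible $f$ of degree $d$ the claim to prove is that some $B\sim C_f$ is a product of exactly $d$ CNOTs --- for instance a cyclic ladder $(I+E_{1,2})(I+E_{2,3})\cdots(I+E_{d-1,d})$ closed up with taps --- which controls to $d$ Toffolis with no SWAP/permutation overhead at all. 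Finally, the advertised special case $f=(1+x)^d$ is handled directly: the all-ones lower-triangular matrix is the staircase $\prod_{i=1}^{d-1}\mathrm{CNOT}(i,i+1)$ of only $d-1$ CNOTs, beating the general bound.

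The hard part will be the irreducible base case: showing that the similarity class of every degree-$d$ irreducible companion matrix contains a product of exactly $d$ \emph{elementary} transvections (CNOTs), so that the intrinsic `rotation' of an irreducible block can be controlled without paying any permutation overhead. This is the step where the $S\cdot E$ bookkeeping fails --- a bare $d$-cycle seems to resist a length-$d$ elementary-transvection expression --- and it must instead be established by an explicit ladder construction whose characteristic polynomial is tuned to $f$, or by a transvection-length argument specialized to the elementary generators. A secondary technical point is to verify that the extension vector $v$ and the $q$-adic coupling vectors really can be cleared or reduced to weight $\le 1$ at every level, so that the per-level overhead is at most one Toffoli and the total telescopes to exactly $d$.
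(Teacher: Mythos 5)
There is a genuine gap, and it sits exactly where you flag it: the case $(x+1)\nmid f$ is never actually proved, and the two reductions you propose for it do not work as stated. For the prime-power case $f=q^k$ with $\deg q = e \ge 2$ and $k\ge 2$, the couplings between adjacent graded pieces of the $(q)$-adic filtration are \emph{not} governed by $C_q - I$: clearing the off-diagonal block $V$ in $\left(\begin{smallmatrix} C_q & V \\ 0 & C_q \end{smallmatrix}\right)$ by a change of basis $\left(\begin{smallmatrix} I & U \\ 0 & I\end{smallmatrix}\right)$ requires $V\in\mathrm{im}(U\mapsto C_qU - UC_q)$, and this Sylvester operator is never surjective (it kills $U=I$); indeed the couplings cannot all be cleared, since $\mathbb{F}_2[x]/(q^k)$ is indecomposable. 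Even if each coupling is reduced to weight $1$, your count becomes $ke + (k-1) = d + k - 1 > d$, e.g.\ $5$ Toffolis for $f=(x^2+x+1)^2$ where the lemma promises $4$. (The operator $C_g+I$ \emph{is} the right one in your $(x+1)\mid f$ branch, because there the quotient piece is one-dimensional with $x$ acting as $1$; you have transplanted that reasoning to a setting where it does not apply.) The irreducible base case is then left as an unproven claim about writing some conjugate of $C_f$ as a product of $d$ elementary transvections, which you yourself identify as the hard part.

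The missing idea is much simpler than a transvection-length argument, and it makes the whole induction unnecessary. Write $C_f = E\cdot S$ where $S$ is the $d$-cycle and $E$ is the feedback consisting of $t$ CNOTs that all share the \emph{same} source (the row carrying the top coefficient). A fan of CNOTs with a common source can be conjugated, by uncontrolled CNOTs acting only among the targets, into a \emph{single} CNOT: if $k$ is one tap position, then $E = P\,(\mathrm{CNOT}_{d\to k})\,P^{-1}$ for a CNOT circuit $P$ (repeatedly apply $\mathrm{CNOT}_{d\to j} = \mathrm{CNOT}_{k\to j}\,\mathrm{CNOT}_{d\to j'}\,\mathrm{CNOT}_{k\to j}\cdots$-type cancellation identities; this is the paper's key diagrammatic step). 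Since conjugating factors are uncontrolled and free, controlling $E$ costs exactly one Toffoli regardless of $t$, and controlling $S$ costs $d-1$ Fredkins $=d-1$ Toffolis, for a total of $d$ uniformly in $f$ with no case analysis. Your first paragraph already contains every ingredient needed for this (the conjugation identity, the $S\cdot E$ factorization, and the observation that a controlled SWAP is one Toffoli); the only thing missing is the collapse of the $t$ taps to one.
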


\begin{proof}
    First note that if $f(0) = 1$, we must have $a_{i0} = 1$. If $a_{ij} = 0$ for $j > 0$, the circuit with this parity matrix is given by a cyclic shift by one qubit, which can be represented either with SWAP gates, or with triples of CNOT gates:
    $$
        \tikzfig{cyclic-permutation-def} \quad\iff\quad \begin{pmatrix}
              &   &        &   & 1 \\
            1 &   &        &   & 0 \\
              & 1 &        &   & 0 \\
              &   & \ddots &   & \vdots \\
              &   &        & 1 & 0
        \end{pmatrix}
    $$
    This can be made controlled by replacing the middle CNOT of each SWAP gate with a Toffoli gate:
    \ctikzfig{cyclic-permutation-controlled}
    Suppose that we have $a_{ij} = 1$ for some $j > 0$, then we can perform row operations using CNOT gates to copy the one in the top-right corner of the above parity matrix to wherever it is required. Here we write $a_{ij}$ next to each CNOT to indicate that it should be generated only when $a_{ij}  = 1$.
    $$
        \begin{pmatrix}
                &   &        &   & 1 \\
              1 &   &        &   & a_{i1} \\
                & 1 &        &   & a_{i2} \\
                &   & \ddots &   & \vdots \\
                &   &        & 1 & a_{i,d-1}
        \end{pmatrix} \quad\iff\quad \tikzfig{companion-matrix-def2}
    $$
    This can also be written as the circuit below on the right, where the middle CNOT targets qubit $k$ such that $k$ is the smallest index where $a_{ik} = 1$ (in this diagram, it is drawn as if $k = 1$, but this is without loss of generality).
    \ctikzfig{companion-matrix-def}
    This follows from repeated application of the following identity:
    \ctikzfig{cnot-cancel-identity}
    Thus, all of these CNOTs can be controlled simply by controlling the middle CNOT.
    \ctikzfig{companion-matrix-controlled}
    Putting this together, we get a circuit for $c(C_{f})$ using only $d$ Toffoli gates.
\end{proof}

\begin{lemma}
    \label{thm:controlledxp1companion}
    For any $d$, the controlled CNOT circuit with parity matrix $C_{(x+1)^d}$ can be implemented with $d - 1$ Toffoli gates.
\end{lemma}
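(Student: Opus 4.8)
The plan is to exploit the fact that over $\F{2}$ the polynomial $(x+1)^d$ is the characteristic and minimal polynomial of a single unipotent Jordan block, which corresponds to a much simpler CNOT circuit than a generic companion matrix. Concretely, I would introduce the lower-bidiagonal matrix $J$ with ones on the diagonal and ones on the subdiagonal, and observe that $J = I + N$ with $N$ the nilpotent down-shift; since $N^d = 0$, its characteristic polynomial is $\det(xI - J) = (x+1)^d$ over $\F{2}$, and because $J$ is a single Jordan block its minimal polynomial is also $(x+1)^d$. Hence $J$ is non-derogatory with the single elementary divisor $(x+1)^d$, exactly as for the companion matrix $C_{(x+1)^d}$. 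By the uniqueness part of Theorem~\ref{thm:rationalcanonicalform}, the two matrices have the same generalized Jordan normal form and are therefore similar: $C_{(x+1)^d} = S J S^{-1}$ for some $S \in GL_d(\F{2})$.

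Next I would construct the controlled circuit for $J$ directly. Reading off \eqref{eq:cnot_circuit_binary_matrix}, the parity matrix $J$ sends $x_i \mapsto x_i \oplus x_{i-1}$ for $i \geq 2$ and fixes $x_1$, which is realized by the staircase applying $\CNOT_{i-1,i}$ for $i = d, d-1, \dots, 2$ (working from the bottom up, so that each control is read before it is overwritten). This uses exactly $d-1$ \CNOT gates, and controlling the whole circuit amounts to replacing each $\CNOT$ by a Toffoli, giving a circuit for $c(J)$ with precisely $d-1$ Toffoli gates.

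Finally I would transport this back to $C_{(x+1)^d}$ using the conjugation identity $c(UVU^{-1}) = (I \otimes U)c(V)(I \otimes U^{-1})$. Writing $U_S$ for the CNOT unitary with parity matrix $S$, the similarity $C_{(x+1)^d} = SJS^{-1}$ lifts, by the composition rule for parity matrices, to $U_{C_{(x+1)^d}} = U_S U_J U_S^{-1}$, so that $c(C_{(x+1)^d}) = (I\otimes U_S)\,c(U_J)\,(I\otimes U_S^{-1})$. Since $S$ is invertible over $\F{2}$, both $U_S$ and $U_S^{-1}$ are pure CNOT circuits (synthesizable by Gaussian elimination, as in Section~\ref{subsec:cnot_layer_binary_matrix}) and contribute no Toffoli gates; they act only on the $d$ target qubits and leave the control untouched. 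Thus the total Toffoli count equals that of $c(U_J)$, namely $d-1$.

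I expect the main obstacle to be the justification that $C_{(x+1)^d}$ and $J$ are genuinely similar over $\F{2}$ -- this is the step that captures why $(x+1)^d$ is special, since its root $1$ makes the matrix unipotent, so none of the cyclic "wrap-around" shift that costs the extra Toffoli in Lemma~\ref{thm:controlledcompanionmatrix} is needed. I would make this rigorous through the elementary-divisor characterization of similarity rather than by exhibiting $S$ explicitly, and I would double-check the edge case $d=1$, where $C_{x+1} = J = (1)$ and the controlled circuit is trivial with $0 = d-1$ Toffoli gates, consistent with the claim.
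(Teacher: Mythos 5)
Your proposal is correct and follows essentially the same route as the paper: the paper also introduces the lower-bidiagonal unipotent matrix (called $B_d$ there), shows its characteristic and minimal polynomials are both $(x+1)^d$ so that it is similar to $C_{(x+1)^d}$, realizes it as a staircase of $d-1$ CNOTs promoted to Toffolis, and conjugates by the (Toffoli-free) change-of-basis CNOT circuit. The only cosmetic difference is that the paper exhibits the similarity transform $M_d$ explicitly as a triangular CNOT pattern, whereas you invoke the uniqueness of the generalized Jordan normal form, which is equally valid.
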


\begin{proof}
    Consider the $d\times d$ matrix $B_d$ defined below, which can be represented as CNOT circuit using $d - 1$ gates:
    $$ B_d = \begin{pmatrix}
        1 & \\
        1 & 1 \\
        & \ddots & \ddots \\
        & & 1 & 1
    \end{pmatrix} \Longleftrightarrow ~~\tikzfig{companion-matrix-alt-basis}$$
    Then the characteristic polynomial of $B_d$ is equal to $(x + 1)^d$ (this can be seen by induction using Laplace expansion along the first row), and the minimal polynomial is equal to the characteristic polynomial (because the basis vector $\vec{e}_1$ is cyclic). Therefore, the generalized Jordan normal form of $B_d$ is exactly $C_{(x+1)^d}$, so there exists a matrix $M_d$ such that $C_{(x+1)^d} = M_d^{-1}B_dM_d$. Concretely, $M_d = M_d^{-1}$ is represented by a triangular pattern of CNOT gates, illustrated here for $M_5$:
    $$M_5 = \begin{pmatrix}  1 & 1 & 1 & 1 & 1 \\ 0 & 1 & 0 & 1 & 0 \\ 0 & 0 & 1 & 1 & 0 \\ 0 & 0 & 0 & 1 & 0 \\ 0 & 0 & 0 & 0 & 1\end{pmatrix} \Longleftrightarrow ~~\tikzfig{companion-matrix-basis-change}$$
    Let $U_C$, $U_M$ and $U_B$ be the CNOT circuits with parity matrices $C_{(x+1)^d}$, $M_d$, and $B_d$ respectively. Then since $B_d$ can be represented as a circuit of $d - 1$ CNOT gates, $c(U_C) = U_M^{-1}c(U_B)U_M$ can be implemented using $d - 1$ Toffoli gates.
\end{proof}

From this, we can construct controlled CNOT circuits in general, in a way that is \emph{approximately optimal} up to a factor of $\frac{3}{2}$, i.e, this construction uses at most $50$\% more Toffoli gates than the circuit with the minimal number of Toffoli gates.

\begin{theorem}
    \label{thm:cnotcontrolled}
    The controlled version of any CNOT circuit on $n$ qubits can be implemented using $n - c$ Toffoli gates, where $c \geq 0$ is the number of generalized Jordan blocks of its parity matrix with polynomial $f(x) = (x + 1)^d$. This procedure uses at most $\frac{3}{2} \times$ the optimum number of Toffoli gates.
\end{theorem}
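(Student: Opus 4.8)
The plan is to put the parity matrix $A$ of the given CNOT circuit into generalized Jordan normal form and then control each companion block separately using Lemmas~\ref{thm:controlledcompanionmatrix} and~\ref{thm:controlledxp1companion}. Concretely, write $A = S^{-1} J S$ with $J = \bigoplus_{i=1}^k C_{f_i}$ as in Theorem~\ref{thm:rationalcanonicalform}, and let $U_S, U_J$ be the CNOT circuits with parity matrices $S$ and $J$. Then $U = U_S^{-1} U_J U_S$, and the conjugation identity gives $c(U) = (I \otimes U_S^{-1})\, c(U_J)\, (I \otimes U_S)$. Since $U_S$ and $U_S^{-1}$ are honest CNOT circuits they contribute no Toffoli gates, so every Toffoli comes from $c(U_J)$. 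Because direct sums of parity matrices correspond to tensor products acting on disjoint qubit blocks, $U_J = \bigotimes_i U_{C_{f_i}}$ and hence $c(U_J) = \prod_i c(U_{C_{f_i}})$, so the Toffoli count is the sum of the per-block counts.

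First I would record that invertibility of $A$ forces $f_i(0) = 1$ for every block: the determinant of $C_{f_i}$ equals its constant term $a_{i0}$, and $\det A = 1$ over $\mathbb{F}_2$ forces each $a_{i0} = 1$. Thus each block is eligible for one of the two lemmas. I would split the blocks into the type-A blocks with $f_i = (x+1)^{d_i}$ (there are $c$ of these) and the remaining type-B blocks; since the only monic degree-one polynomial with constant term $1$ is $x+1$, every type-B block has degree $d_i \geq 2$, so Lemma~\ref{thm:controlledcompanionmatrix} applies to it at cost $d_i$ Toffolis, while Lemma~\ref{thm:controlledxp1companion} handles each type-A block at cost $d_i - 1$. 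Summing, the total is $\sum_i d_i - \#\{\text{type-A}\} = n - c$, since the block sizes sum to $n$.

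To prepare for the optimality claim, I would identify this count with a basis-independent invariant: over $\mathbb{F}_2$ one has $n - c = \operatorname{rank}(A + I)$. Indeed $\operatorname{rank}$ is invariant under the similarity $A \mapsto J$, and block-wise $C_{(x+1)^{d}} + I$ is nilpotent of index $d$ and hence has rank $d-1$, whereas for a type-B block $1$ is not a root of $f_i$, so $C_{f_i} + I$ is invertible of rank $d_i$; summing reproduces $n - c$. This both confirms the count and fixes the quantity against which optimality is measured.

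The main obstacle is the lower bound: I must show that any CNOT+Toffoli circuit realizing $c(U)$ uses at least $\tfrac{2}{3}\operatorname{rank}(A+I)$ Toffoli gates, which converts the upper bound into the claimed factor $\tfrac32$. The natural handle is that $c(U)$ is the degree-two reversible map $(c, \vec x) \mapsto (c,\ \vec x + c\,(A+I)\vec x)$, whose entire nonlinearity is the bilinear pairing between the control bit and the row space of $A + I$, of rank $\operatorname{rank}(A+I)$. Writing any implementation as a straight-line program whose only nonlinear gates are the $t$ Toffolis, each Toffoli introduces a single product of two (possibly already nonlinear) wire values, and I would bound how much each such product can contribute to the rank of the bilinear form the circuit must ultimately produce. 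A naive "one unit of rank per Toffoli" count would only yield exact optimality, so the delicate point — and the real crux of the theorem — is to show that wire reuse lets a single Toffoli account for at most $\tfrac32$ units of this rank (equivalently, that a cleverer construction can save at most a third), thereby pinning the constant at $\tfrac23$; by comparison, the block decomposition and the two lemmas make the upper bound essentially routine.
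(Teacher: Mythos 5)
Your upper bound is correct and matches the paper's proof essentially step for step: conjugate by the CNOT circuit for $S$, decompose $c(U_J)$ block by block, invoke Lemma~\ref{thm:controlledcompanionmatrix} for blocks with $f_i \neq (x+1)^{d_i}$ (your determinant argument for $f_i(0)=1$ is a fine substitute for the paper's all-zero-row observation) and Lemma~\ref{thm:controlledxp1companion} for the rest, summing to $n-c$. Your identification $n - c = \operatorname{rank}(A+I)$ is a nice reformulation and is consistent with the paper's Lemma~\ref{thm:eigenblocks}, which shows $c$ equals the number of independent eigenvectors of the parity matrix, i.e.\ the nullity of $A+I$.

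The genuine gap is the optimality claim, which you sketch but do not prove. Your proposed lower bound --- bounding the rank of the bilinear form that each Toffoli can contribute in a straight-line $\mathbb{F}_2$ program --- has two problems. First, it only addresses implementations by CNOT+Toffoli circuits viewed as reversible polynomial programs over $\mathbb{F}_2$, whereas the optimum must be taken over general Clifford+Toffoli circuits (Hadamards, phases, ancillas), for which no such algebraic normal form is available. Second, even within that restricted class, the key step --- that wire reuse lets a single Toffoli account for at most $\tfrac{3}{2}$ units of rank --- is asserted rather than proved, and it is not clear how the constant would emerge from such an accounting. The paper's route is entirely different: it uses the \emph{unitary stabilizer nullity} of \cite{jiang2023lower}. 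Lemmas~\ref{thm:noxstabs} and~\ref{thm:eigenstabs} count the generalized stabilizers of $c(C)$ as $2^{2\lambda+1}$, where $\lambda$ is the number of independent eigenvectors of the parity matrix over $\mathbb{F}_2$; Lemma~\ref{thm:eigenblocks} shows $\lambda = c$; and then the Toffoli count of $c(C)$ is bounded below by $v(c(C))/v(CCZ) = \tfrac{2}{3}\left(n - \lambda + \tfrac{1}{2}\right)$. The factor $\tfrac{2}{3}$ comes from $v(CCZ) = 3$, a property of the Toffoli gate itself, not from any wire-reuse argument, and the bound holds against arbitrary Clifford+Toffoli circuits. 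Without this (or an equivalent rigorous lower bound valid in that generality), the second sentence of the theorem is unsupported.
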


\begin{proof}
    Let the $n \times n$ parity matrix of the circuit be $A$. Construct the rational canonical form of $A$ as:
    $$
    A = S(C_{f_1} \oplus C_{f_2} \oplus \cdots \oplus C_{f_k})S^{-1}
    $$
    Using Gaussian elimination, synthesize CNOT circuits for $S$ and $S^{-1}$. Then the controlled circuit can be constructed as follows, where each $c(C_{f_i})$ is constructed according to Lemma \ref{thm:controlledcompanionmatrix} when $f_i \neq (x + 1)^{d_i}$ and using Lemma \ref{thm:controlledxp1companion} otherwise. This is possible because each $f_i$ must have $f_i(0) = 1$, otherwise $C_{f_i}$ would have an all-zero row, and the matrix would not be invertible, which is a contradiction since $A$ is a parity matrix and must be invertible.
    \ctikzfig{cnot-controlled}
    Since $C_{f_1} \oplus \cdots C_{f_k}$ has the same dimensions as $A$, the sum of the degrees $d_i$ of each $f_i$ is $n$. Each $c(C(_{fi}))$ requires $d_i$ Toffoli gates, except when $f_i(x) = (x + 1)^{d_i}$, when only $d_i - 1$ are required. Therefore, the total number of Toffoli gates will be $n - c$ where $c$ is the number of $f_i$ with this form. It is shown in Theorem \ref{thm:cnotoptimal} (Appendix \ref{app:asymptoticproof}) that this is approximately optimal.
\end{proof}

\subsection{Reducing the Toffoli-depth to $O(1)$ without ancillas}
\label{subsec:cnotdepthconstant}

We will now show that, at the cost of doubling the Toffoli count, the Toffoli depth of this construction can be reduced to $O(1)$ by using a technique known as \emph{toggle detection} \cite{gidney2015constructing}. This requires the use of borrowed ancillas -- these are ancilla qubits that are allowed to be in any state (not necessarily $\ket{0}$) and which must be returned to this state at the end of the computation. Crucially, for any operation that does not act on all the qubits in the circuit, the remaining qubits may be used as borrowed ancillas.

\begin{lemma}
    \label{thm:togglehermitian}
    For an arbitrary Hermitian unitary $V$, let $c(\bigotimes_{i=1}^nV_i)$ be $n$ copies of $c(V)$ sharing the same control and distinct targets. This unitary can be implemented in $c(V)$-depth of two using $n$ borrowed ancillas or $c(V)$ depth of four without ancillas. In both cases, the $c(V)$-count is $2n$.
\end{lemma}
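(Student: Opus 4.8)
The plan is to use the \emph{toggle detection} trick to trade the shared control qubit for one distinct borrowed ancilla per target, so that the $n$ copies of $c(V)$ no longer all act on the control and can therefore be applied in parallel. The key algebraic fact is that since $V$ is a Hermitian unitary it is an involution, $V^2 = VV^{\dagger} = I$, and hence $V^{a\oplus c}V^{a} = V^{c}$ for all $a,c\in\{0,1\}$.

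I would first establish the single-target gadget. Given the control $c$, a target $t$ in state $|\psi\rangle$, and a borrowed ancilla $a$, apply in order: $c(V)$ controlled on $a$; a $\CNOT$ from $c$ into $a$; $c(V)$ controlled on $a$ again; and a final $\CNOT$ from $c$ into $a$. Tracking a basis state gives
\[
|c\rangle|a\rangle|\psi\rangle \;\mapsto\; |c\rangle|a\oplus c\rangle\,V^{a\oplus c}V^{a}|\psi\rangle \;=\; |c\rangle|a\rangle\,V^{c}|\psi\rangle ,
\]
so the ancilla is returned to its initial value and $V^{c}$ is applied to the target. By linearity this holds for an ancilla in any state, even one entangled with the rest of the register, which is exactly what is needed for a borrowed ancilla. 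Crucially, both $c(V)$ gates in the gadget are controlled on $a$, not on $c$; the only operations touching $c$ are the two $\CNOT$s, which are Clifford and do not contribute to the $c(V)$-depth.

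For the ancilla version I would run $n$ of these gadgets, one per target $t_i$ with its own ancilla $a_i$. The two rounds of $c(V)$ gates then act on disjoint qubit pairs $\{a_i,t_i\}$, so each round has $c(V)$-depth $1$; separating and following them are (free) $\CNOT$ fan-outs from $c$ onto the $a_i$. This yields $c(V)$-depth $2$ and $c(V)$-count $2n$ with $n$ borrowed ancillas. For the ancilla-free version I would split the targets into two halves $A$ and $B$ and use each half as borrowed ancillas for the other: in phase $1$ apply $V^{c}$ to $A$ using $B$ as ancillas, which restores $B$; in phase $2$ apply $V^{c}$ to $B$ using $A$ (now in arbitrary post-$V^{c}$ states, still valid borrowed ancillas) as ancillas. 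Since each phase has $c(V)$-depth $2$, the total is $c(V)$-depth $4$ and $c(V)$-count $2n$.

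The main thing to get right is the correctness of the borrowed-ancilla gadget for arbitrary ancilla states — the involution identity together with the linearity argument, which is the only place the Hermiticity hypothesis is used and without which the gadget fails — and, for the ancilla-free construction, checking that the two-phase scheme never creates a circular dependency that would block parallelism and that each phase has enough target-ancillas. The latter is immediate for even $n$, where both halves have size $n/2$; for odd $n$ the one leftover target can be handled by a single $c(V)$ controlled directly on $c$, which can be slotted into an existing depth layer (its support is disjoint from the gadget gates) while keeping the count at most $2n$.
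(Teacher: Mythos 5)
Your proposal is correct and follows essentially the same route as the paper: the same toggle-detection gadget (two $c(V)$ gates controlled on a borrowed ancilla, interleaved with CNOTs from the true control, using $V^{a\oplus c}V^{a}=V^{c}$ from $V^2=I$), parallelized with $n$ distinct ancillas for depth two, and the same two-phase halving trick for the ancilla-free depth-four case. Your explicit handling of odd $n$ and of entangled borrowed-ancilla states is slightly more careful than the paper's, but the argument is the same.
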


\begin{proof}
    First, consider the case of $n = 1$. We have the following identity
    \ctikzfig{toggle-detection-single}
    since, if $V$ is Hermitian unitary then $V^2 = I$ and so the action on the middle register given the first and last qubits are in the states $\ket{x_1}$ and $\ket{x_2}$ is $V^{x_2}V^{x_1 \oplus x_2} = V^{2(x_2 - x_1x_2) + x_1} = V^{x_1}$. Now, for $n > 1$, note that by applying the $n = 1$ case to each copy of $V$ independently using a distinct borrowed ancilla, we find
    \ctikzfig{toggle-detection-many}
    and the $c(V)$ gates on the right-hand side can be organized into two layers with $c(V)$-depth one, indicated by the gray boxes. If no borrowed ancillas are available, we can split the computation in half: perform one half of the $c(V)$ gates in $c(V)$-depth two by borrowing qubits from the other half, and vice-versa, for a total $c(V)$-depth of four.
\end{proof}

\begin{lemma}
    \label{thm:shiftdepth}
    A controlled cyclic shift of $n$ qubits may be implemented with Toffoli-depth 8 and Toffoli-count $2(n - 1)$ without ancillas. 
\end{lemma}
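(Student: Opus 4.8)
The plan is to reduce a controlled cyclic shift to the form handled by Lemma~\ref{thm:togglehermitian}, namely several controlled copies of a single Hermitian unitary sharing one control and acting on \emph{distinct} targets. The obstacle is that the obvious way to write an $n$-cycle, as adjacent transpositions $\mathrm{SWAP}_{1,2}\cdot\mathrm{SWAP}_{2,3}\cdots\mathrm{SWAP}_{n-1,n}$, uses SWAPs that overlap on shared qubits; their controlled versions therefore do not have distinct targets, so toggle detection cannot be applied directly.

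To fix this, I would first rewrite the cyclic shift as a product of two involutions whose cycles are pairwise disjoint transpositions -- equivalently, via the ``rotation by reversals'' identity, the shift by one equals $R \cdot R'$, where $R'$ reverses qubits $2,\dots,n$ and $R$ reverses all $n$ qubits. Each reversal pairs qubit $i$ with qubit $m+1-i$ and so is a single layer of \emph{disjoint} SWAP gates: $R$ contains $\lfloor n/2\rfloor$ SWAPs and $R'$ contains $\lfloor (n-1)/2\rfloor$, for a total of $n-1$ disjoint SWAPs. Using $c(UV)=c(U)c(V)$ together with the fact that controlling a tensor product of disjoint gates by a common control factors as a product of individually controlled gates, the controlled shift becomes $c(R)\,c(R')$, where each factor is exactly of the form $c\!\left(\bigotimes_i \mathrm{SWAP}_i\right)$ with SWAPs on distinct target pairs.

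Since $\mathrm{SWAP}$ is a Hermitian unitary and $c(\mathrm{SWAP})$ is a Fredkin gate implementable with a single Toffoli gate and two \CNOTs, I would then apply the no-ancilla version of Lemma~\ref{thm:togglehermitian} separately to $c(R)$ and to $c(R')$, taking $V = \mathrm{SWAP}$. Each application realizes its controlled reversal in $c(\mathrm{SWAP})$-depth $4$, hence Toffoli-depth $4$, using a $c(\mathrm{SWAP})$-count equal to twice the number of SWAPs in that reversal. Composing the two reversals sequentially then yields Toffoli-depth $4+4=8$, and the total Toffoli-count is $2\lfloor n/2\rfloor + 2\lfloor (n-1)/2\rfloor = 2(n-1)$, as claimed.

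The step requiring the most care is the accounting inside the no-ancilla toggle detection: each reversal occupies all (or all but one) of the $n$ qubits as SWAP targets, so the borrowed ancillas must be obtained by splitting the disjoint SWAPs of that reversal into two halves and letting each half borrow the target qubits of the other -- this is precisely what raises the depth from $2$ (the borrowed-ancilla case) to $4$. I would also verify that $R \cdot R'$ composes to the intended cyclic permutation and note that the left-versus-right shift direction is immaterial to the counts, so the bound holds for any controlled cyclic shift.
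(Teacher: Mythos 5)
Your proposal is correct and follows essentially the same route as the paper: decompose the cyclic shift into two depth-one layers of disjoint SWAPs via the two-reversals identity, then apply the ancilla-free case of Lemma~\ref{thm:togglehermitian} to each layer, giving Toffoli-depth $4+4=8$ and Toffoli-count $2(n-1)$. Your accounting of the SWAP counts per reversal and of why the no-ancilla variant is needed matches the paper's argument.
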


\begin{proof}
    We will show that a cyclic shift $S$ can be implemented as a circuit of SWAP gates with depth two, and hence by applying Lemma \ref{thm:togglehermitian} $c(S)$ can be implemented with Toffoli-depth $8$. The cyclic shift $S$ can be implemented by first reversing the first $n - 1$ qubits, and then reversing all $n$ qubits -- this is illustrated below, where $R$ represents the permutation reversing a given set of qubits:
    \ctikzfig{cycle-reverse}
    To show that this is correct, consider qubit with index $i \neq n - 1$. This will be mapped to qubit $(n - 2) - i$ by the first reversal and then to $(n - 1) - (n - 2 - i) = i + 1$ by the second reversal. Now considering $i = n - 1$, this will be mapped to $(n - 1) - (n - 1) = 0$, and so the circuit correctly maps every qubit. The number of SWAP gates used for this representation is $n - 1$, and so the number of Toffoli gates needed after applying Lemma \ref{thm:togglehermitian} will be exactly $2(n - 1)$.
\end{proof}

\begin{theorem}
    \label{thm:ctrlcnotlowdepth}
    The controlled version of any CNOT circuit $C$ on $n$ qubits can be implemented with Toffoli-depth at most 12 and a Toffoli-count of at most $2n$, without ancillas.
\end{theorem}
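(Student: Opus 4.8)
The plan is to keep the rational-canonical-form decomposition of Theorem~\ref{thm:cnotcontrolled} but to resynthesize each controlled block so that its Toffoli gates fall into a constant number of parallelizable layers, and then to flatten those layers with toggle detection. Theorem~\ref{thm:cnotcontrolled} gives $c(A) = (I \otimes S)\, c(C_{f_1} \oplus \cdots \oplus C_{f_k})\, (I \otimes S^{-1})$ with $S, S^{-1}$ realized as plain CNOT circuits, which contribute no Toffoli gates, so it is enough to treat $c(C_{f_1} \oplus \cdots \oplus C_{f_k})$. The blocks act on disjoint qubits and share only the control line, so the sole barrier to parallelism is this common control -- exactly what Lemma~\ref{thm:togglehermitian} removes.

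A key decision is to build \emph{every} block, including the $(x+1)^{d_i}$ ones, via the general construction of Lemma~\ref{thm:controlledcompanionmatrix} rather than the count-optimal Lemma~\ref{thm:controlledxp1companion}: the latter routes through the circuit $B_{d_i}$, which is an inherently sequential chain of nearest-neighbour CNOTs (depth $d_i - 1$), whereas Lemma~\ref{thm:controlledcompanionmatrix} presents $c(C_{f_i})$, up to plain CNOTs, as a \emph{controlled cyclic shift} of the $d_i$ block qubits together with at most one \emph{controlled copy} CNOT. This costs at most $d_i$ Toffolis per block rather than $d_i - 1$, but summed over blocks that is still at most $n$, which is all we need for the count bound.

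The depth bound then comes from counting controlled-gate layers. Implementing the cyclic shift by the reverse-reverse trick of Lemma~\ref{thm:shiftdepth} writes it as two layers of disjoint SWAP gates, and the copy adds one further CNOT; hence each block's Toffoli content occupies at most three time-slices of controlled Hermitian gates (two of controlled SWAPs, one of a controlled CNOT). Because distinct blocks are supported on disjoint qubits, these slices pack -- irrespective of the per-block ordering of shift and copy -- into three global layers, each a collection of $c(\mathrm{SWAP})$ or $c(\mathrm{CNOT})$ gates sharing only the control. Applying Lemma~\ref{thm:togglehermitian} (with $V = \mathrm{SWAP}$ or $\mathrm{CNOT}$, both Hermitian) to each layer flattens it to $c(V)$-depth four without ancillas, and since $c(\mathrm{SWAP})$ and $c(\mathrm{CNOT})$ each have Toffoli-depth one, the total Toffoli-depth is at most $3 \times 4 = 12$. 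Toggle detection doubles the gate count, and the number of controlled gates is $\sum_i (d_i - 1) + (\text{copies}) \le n$, so the Toffoli-count is at most $2n$.

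The main obstacle is the clean organization into three layers: one must confirm that, up to plain CNOTs, each $c(C_{f_i})$ really does reduce to a depth-two controlled shift plus a single controlled CNOT (reading this off Lemma~\ref{thm:controlledcompanionmatrix}'s construction), and that the ancilla-free version of Lemma~\ref{thm:togglehermitian} -- which borrows qubits from the complementary half of the register -- remains valid when a layer spans all $n$ qubits. A small additional check is that a single global layer may mix SWAP and CNOT gates, but toggle detection applies to each gate independently and only uses that the gate squares to the identity, so this is harmless.
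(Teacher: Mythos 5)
Your proposal is correct and follows essentially the same route as the paper: decompose via the generalized Jordan form, render every block (including the $(x+1)^{d_i}$ ones) as a controlled cyclic shift plus one controlled copy via Lemma~\ref{thm:controlledcompanionmatrix}, and flatten with toggle detection. Your $3\times 4 = 12$ accounting is just the paper's $8+4$ with the shift's two SWAP layers of Lemma~\ref{thm:shiftdepth} unbundled, and the count bound matches.
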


\begin{proof}
    Using the same construction as in Theorem \ref{thm:cnotcontrolled}, write $c(C)$ as sequence of blocks $c(C_{f_i})$ which are disjoint except for the control qubit, up to a change of basis. For each $c(C_{f_i})$ with $f_i \neq 1$ (since $C_1$ is the identity), decompose it, as in Lemma \ref{thm:controlledcompanionmatrix}, into a controlled cyclic shift and a final Toffoli gate conjugated by CNOTs. All of the controlled cyclic shifts may be implemented simultaneously in Toffoli-depth 8 by applying Lemma \ref{thm:shiftdepth} to each. All of the final Toffoli gates can be implemented simultaneously in Toffoli-depth 4 by applying Lemma \ref{thm:togglehermitian}. The Toffoli-count is more than doubled relative to Lemma \ref{thm:controlledcompanionmatrix} due to the application of Lemma \ref{thm:togglehermitian} throughout; in Theorem \ref{thm:depthoptimalapprox} it is shown that this is approximately optimal within a factor of $6$.
\end{proof}

\subsection{Reducing Toffoli-depth to one with $O(n)$ ancillas}
\label{sec:cnot_toffoli_one}

Now we describe an alternative method to implement a controlled CNOT circuit. Compared to the method of Section~\ref{subsec:cnotdepthconstant}, there are two main differences. First, this method achieves the optimal Toffoli depth, which is $1$. Second, it uses a measurement-based uncomputation, which involves applying a Clifford correction based on a measurement~\cite{Gidney2018halvingcostof} - the uncomputation can be realized in depth of one, and as such, the reaction depth of this approach is two. Note that this also implies that a $T$-depth of $1$ is achievable with $O(n)$ ancillas, since any Toffoli gate can be implemented with a $T$-depth of one using $O(1)$ ancilla qubits \cite{Selinger2013}.

\begin{theorem}
    \label{thm:cnotoptimaldepth}
    The controlled version of any CNOT circuit on $n$ qubits can be implemented with Toffoli-depth exactly $1$ and a Toffoli-count of at most $n$, using $2n - 1$ clean ancillas.
\end{theorem}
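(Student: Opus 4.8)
The plan is to realize the map $\ket{c}\ket{\vec x}\mapsto\ket{c}\ket{A^c\vec x}$, where $A\in GL_n(\mathbb{F}_2)$ is the parity matrix of $C$, by computing the output \emph{out of place} into clean ancillas using a single parallel layer of Toffoli gates, and then erasing the input register with an $X$-basis measurement whose only effect is a classically-controlled Clifford correction. First I would reduce to the companion-block case exactly as in Theorem \ref{thm:cnotcontrolled}: writing $A = S(C_{f_1}\oplus\cdots\oplus C_{f_k})S^{-1}$, the basis changes are free CNOT circuits, so by the identity $c(UVU^{-1}) = (I\otimes U)c(V)(I\otimes U^{-1})$ it suffices to control the sparse matrix $B := \bigoplus_i C_{f_i}$ without adding Toffolis. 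Setting $M = B\oplus I$ (sparse, since each companion block is), the controlled action on a data vector is $\vec x\mapsto \vec x\oplus c\,(M\vec x)$.

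The forward computation is done in Toffoli-depth one. Into a fresh $n$-qubit register $Z$ I would copy $\vec x$ with CNOTs, then compute the low-weight parities $(M\vec x)_j$ into a set of clean parity ancillas, again with CNOTs; all of this is Clifford. I would then apply, for each output coordinate $j$, one Toffoli with controls $c$ and the $j$-th parity and target $Z_j$, so that $Z_j$ absorbs $c\,(M\vec x)_j$ and hence $Z$ holds $A^c\vec x$. These Toffolis act on disjoint targets and share only ``classical-style'' control data, so they form a single parallel layer of depth $1$ and count at most $n$ (one per nonzero row of $M$). Crucially, the input register still holds $\vec x$ at this point, so the parity ancillas are uncomputed for free by the reverse CNOTs.

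The crux is erasing the input register without spending any further Toffoli depth. Coherently resetting it would require applying $A^{-c}$ to it, a \emph{controlled} CNOT operation that again costs Toffolis. Instead I would measure every input qubit in the $X$-basis. Tracking a general input $\sum_{c,\vec x}\beta_{c,\vec x}\ket{c}\ket{\vec x}_X\ket{A^c\vec x}_Z$ and substituting $\vec y = A^c\vec x$, an outcome $\vec m$ leaves $Z$ in the target state up to a phase $(-1)^{\vec m\cdot A^{-c}\vec y}$. The key observation is that this phase is linear in $\vec y$ and affine in the control bit $c$, namely $\vec m\cdot A^{-c}\vec y = \vec m\cdot\vec y \oplus c\big(((A^{-1}\oplus I)^{T}\vec m)\cdot\vec y\big)$. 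It is therefore cancelled exactly by a Clifford correction classically controlled on $\vec m$: a Pauli-$Z$ string $Z^{\vec m}$ on $Z$, together with \CZ gates between the control $c$ and those $Z_j$ selected by $(A^{-1}\oplus I)^{T}\vec m$. This is precisely a measurement-based uncomputation — one round of (parallel) measurement followed by a Clifford correction — which preserves Toffoli-depth one while giving reaction depth two, and after relabeling $Z$ as the data register completes $c(U_C)$.

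I expect the main obstacle to be exactly this last step: verifying both that the $X$-basis measurement disentangles the input register correctly and that the induced data-dependent phase is removable by Cliffords alone (the affine-in-$c$ structure is what makes the \CZ correction, rather than another Toffoli, suffice); everything upstream is routine Clifford bookkeeping. Finally I would account for the ancillas — the $n$-qubit output register together with the ancillas staging the Toffoli controls in parallel, which by the sparsity of the companion blocks can be arranged to total $2n-1$ clean qubits — and note that Toffoli-depth one is optimal, since any nontrivial controlled CNOT circuit is genuinely non-Clifford and hence requires at least one Toffoli.
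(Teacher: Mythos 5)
Your overall strategy --- a single fanned-out layer of Toffolis implementing the controlled update, followed by an $X$-basis measurement of a now-redundant register and a classically controlled $Z$/CZ correction --- is the same as the paper's, and your phase analysis is correct: the residual phase exponent $\vec{m}\cdot A^{-c}\vec{y} = \vec{m}\cdot\vec{y} \oplus c\,(((A^{-1}+I)^{T}\vec{m})\cdot\vec{y})$ is indeed linear in $\vec{y}$ and affine in $c$, so it is removable by Pauli-$Z$ and CZ gates alone. The difference is that you work ``out of place'' (the output is built in a fresh register $Z$ and the input register is measured away), whereas the paper works in place (the Toffolis target the input register itself, and the auxiliary register holding $(A+I)\vec{x}$ is the one measured).

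The genuine gap is the ancilla count. As described, your circuit needs three resources beyond the control and input qubits: the $n$-qubit output register $Z$, a register of parity ancillas holding the bits $(M\vec{x})_j$ that serve as Toffoli controls, and $n-1$ ancillas to fan out the control qubit $c$. The fanout is not optional: all $n$ Toffolis act on the single qubit $c$, so ``sharing only classical-style control data'' does not put them in one depth layer --- each needs its own coherent copy of $c$. That totals roughly $3n-1$ clean ancillas, and the appeal to sparsity of the companion blocks does not close the gap: sparsity reduces the CNOT cost of computing the parities, not the number of qubits needed to hold them, and $M=B+I$ has up to $n$ nonzero rows. The fix is to merge the output register with the input register: compute $D\vec{x}$ (with $D=A+I$) into a single $n$-qubit ancilla register and let the Toffolis, controlled on copies of $c$ and on that register, target the input register directly, producing $\ket{a}\ket{\vec{x}+aD\vec{x}}\ket{D\vec{x}}$; measuring the third register in the $X$-basis yields the phase $(-1)^{D\vec{x}\cdot\vec{m}}$, which the same affine-in-$a$ argument (now with $\vec{u}=D^{T}\vec{m}$ and $\vec{v}=(A+A^{-1})^{T}\vec{m}$) removes with $Z$ and CZ gates. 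This is exactly the paper's arrangement and meets the $2n-1$ budget. Note also that your reduction to companion-block form is unnecessary here, since the map computing $D\vec{x}$ is Clifford regardless of whether $D$ is sparse.
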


\begin{proof}
    A high-level description of this circuit is shown in Figure~\ref{fig:depth_one}, which we will explain in more detail now. Without loss of generality, let $A$ be invertible matrix associated with a CNOT circuit $U$ (as in Equation ~\eqref{eq:cnot_circuit_binary_matrix}). We aim to implement $c(U)$. To that end, let $D=A+I$, and note the following identity:
    \begin{equation}
        c(U)(|a\rangle|\vec{x}\rangle) = |a\rangle|\vec{x} + aD\vec{x}\rangle.  \label{eq:cU_based_on_D}  
    \end{equation}
    We will decompose the right-hand side of Equation \eqref{eq:cU_based_on_D} into more elementary operations. For this purpose, the following two unitaries will be useful. 
    \begin{align*}
        U_D(|\vec{x}\rangle|\vec{y}\rangle) &= |\vec{x}\rangle|\vec{y} + D\vec{x}\rangle, \\
        \overline{\text{CNOT}} (|\vec{x}\rangle|\vec{y}\rangle) &= |\vec{x} + \vec{y}\rangle |\vec{y}\rangle. 
    \end{align*}
    Note that both unitaries can be represented in terms of CNOT circuits because they are linear and invertible on the bitstrings. Given $n$ ancillas, by applying $U_D$ and then $c(\overline{\text{CNOT}})$, we obtain the following state:
    \begin{equation}
            c(\overline{\text{CNOT}})(I\otimes U_D)(|a\rangle|\vec{x}\rangle|0^n\rangle )
            = |a\rangle|\vec{x} + aD\vec{x}\rangle|D\vec{x}\rangle.
        \label{eq:cU_D_unitary_part}
    \end{equation}
    Note the only non-Clifford part of this circuit is $c(\overline{\text{CNOT}})$, which consists of at most $n$ Toffolis. Moreover, these Toffolis can be furthermore parallelized to a depth-$1$ circuit using fanout. 
    
    Equation~\eqref{eq:cU_D_unitary_part} is almost what we need, but not quite. In order to apply Equation~\eqref{eq:cU_based_on_D}, one needs to uncompute the third register. Happily, the uncomputation can be done by measuring the third register in the $X$-basis and applying CZ and $Z$ corrections that depend on the measurement outcome. Without loss of generality, let $\vec{m}$ be the measurement outcome. The resulting state, up to normalization, is $(-1)^{D\vec{x} \cdot \vec{m}}|a\rangle |A^a\vec{x}\rangle$. Denoting $\vec{y}=A^a\vec{x}$, we get
    \begin{align*}
        D\vec{y} + a(A + A^{-1})\vec{y} &= DA^a\vec{x} + a(A + A^{-1})A^a\vec{x} \\
            &= DA^a\vec{x} + a(A + A^{-1})A\vec{x} \\
            &= D[(1 - a)I + aA]\vec{x} + a(I + A^2)\vec{x} \\
            &= [I + A + a(I + A^2)]\vec{x} + a(I + A^2)\vec{x} = D\vec{x}
    \end{align*}
    Therefore, the uncomputation can be carried out by applying the following phase 
    \begin{equation*}
        |a\rangle|y\rangle \to (-1)^{\vec{y}\cdot \vec{u} + a\vec{y} \cdot \vec{v}} |a\rangle|y\rangle,
    \end{equation*}
    where $\vec{u} = D^T\vec{m}$ and $\vec{v}= (A+A^{-1})^T\vec{m}$. Furthermore,
    \begin{equation*}
        \begin{aligned}
            \left(I\otimes \prod_{k=1}^{n} Z_k^{u_k} \right) (|a\rangle|y\rangle) &= (-1)^{\vec{y} \cdot \vec{u}} |a\rangle|y\rangle, \\
            c\left(\prod_{k=1}^{n} Z_{k}^{v_k}\right) (|a\rangle |y\rangle) = \prod_{k=1}^{n} \text{CZ}_{a,k}^{v_k} (|a\rangle |y\rangle) &= (-1)^{a\vec{y}\cdot \vec{v}}|a\rangle|y\rangle,
        \end{aligned}
    \end{equation*}
    where $Z_k$ is the Pauli $Z$ on the $k$'th qubit of the second register and $\text{CZ}_{a,k}$ is the CZ gate between the control qubit and the $k$'th qubit of the second register. The $k$'th component of the vectors $\vec{u}$ and $\vec{v}$ are denoted as $u_k$ and $v_k$ respectively. Therefore, the uncomputation can be carried out by a set of $Z$ and CZ gates, which can can be parallelized by applying a fanout to the control qubit. Theorem \ref{thm:depthoptimalapprox} shows that a small addition to this procedure results in an approximately optimal Toffoli count within a factor of $3$, while keeping the $T$-depth and reaction depth the same.
\end{proof}

\begin{figure}
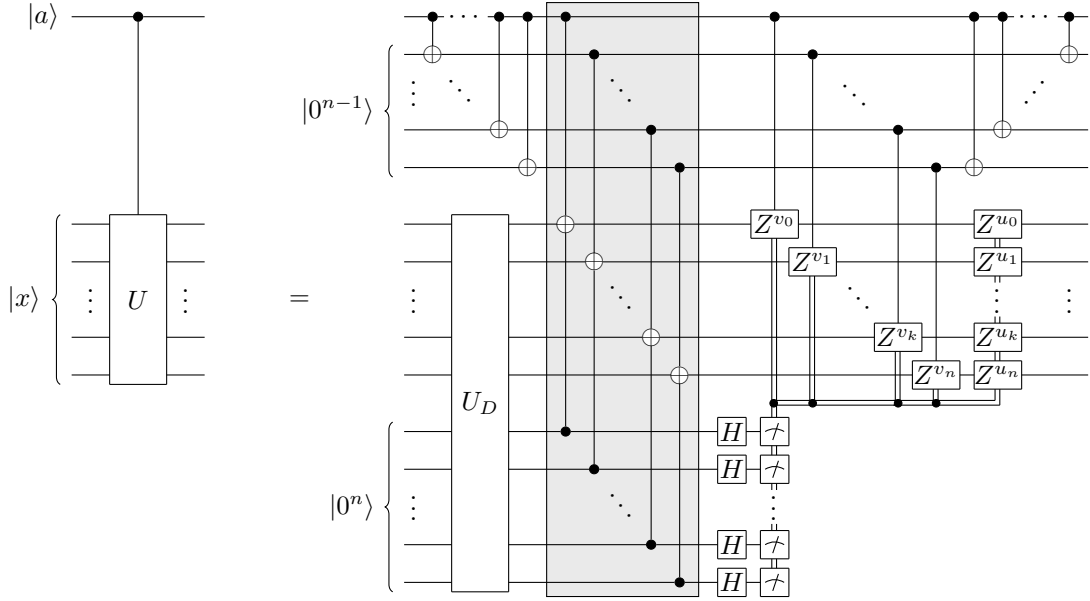

    \centering
    \ctikzfig{controlled-cnot-depth-1}
    \caption{A high-level overview of the circuit that implements $c(U)$ with Toffoli-depth $1$, where $U$ is a CNOT circuit. Here, $U_D$ implements $|a\rangle|\vec{x}\rangle|0\rangle \to |a\rangle|\vec{x}\rangle|D\vec{x}\rangle$ while $\vec{u}$ and $\vec{v}$ are binary vectors that depend on the measurement outcome. The only non-Clifford part of this circuit is the $n$ Toffolis, highlighted in grey, which are trivially parallelizable.}
    \label{fig:depth_one}
\end{figure}

\section{Generalizations}
\label{sec:gen}

\subsection{Controlled Clifford Circuits}
\label{subsec:controlledclifford}

We start by noting that a controlled depth-$1$ layer of $S$ gates can be implemented in constant $T$-depth whenever the whole circuit contains at least three qubits. This restriction is necessary because $c(S)$ cannot be synthesized in the Clifford+T gate set using fewer than two borrowed ancillas \cite[Corollary 2]{giles2013exact}.

\begin{lemma}
    \label{thm:ctrlsdepth}
    $c(S^{\otimes k})$ can be implemented in constant $T$-depth and $O(k)$ $T$-count whenever $k \geq 3$ or two borrowed ancillas are available.
\end{lemma}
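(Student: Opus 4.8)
The plan is to regard $c(S^{\otimes k})$ as the diagonal gate $\prod_{j=1}^{k}\mathrm{CS}_{c,j}$, where $c$ is the shared control and $j$ ranges over the $k$ targets, and to implement it via a phase polynomial. On a basis state $\ket{a}\otimes\ket{x_1\cdots x_k}$ the gate contributes the phase $i^{a\sum_j x_j}=\prod_j i^{ax_j}$. The key arithmetic fact is that for bits we have $ax_j=\tfrac12\bigl(a+x_j-(a\oplus x_j)\bigr)$ over the integers, so the total phase factorizes as
\begin{equation*}
\exp\Bigl(\tfrac{i\pi}{4}\bigl(ka+\textstyle\sum_j x_j-\sum_j(a\oplus x_j)\bigr)\Bigr).
\end{equation*}
Each group of terms is realized by single-qubit phase gates: the term $\tfrac{i\pi}{4}ka$ is $T^{k}$ on the control, each $\tfrac{i\pi}{4}x_j$ is a $T$ on target $j$, and each $-\tfrac{i\pi}{4}(a\oplus x_j)$ is a $T^{\dagger}$ applied to the parity $a\oplus x_j$. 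Term by term this is exactly the standard identity $\mathrm{CS}=(T\otimes T)\,\mathrm{CNOT}\,(I\otimes T^{\dagger})\,\mathrm{CNOT}$, one copy per target.

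To obtain constant $T$-depth I would fuse and parallelize across $j$. First apply $T$ to every target and $T^{k}$ to the control in a single layer: these act on distinct qubits, and the $k$ would-be copies of $T$ on the control collapse into one $T^{k}$, which is at most a single $T$ gate modulo Clifford. Next compute all of the parities $a\oplus x_j$ simultaneously by fanning the control out onto the targets with \CNOT gates, apply $T^{\dagger}$ to every target in a second layer, and then undo the fanout with a matching layer of \CNOT gates. The fanout and its inverse are Clifford, so they never enter the $T$-depth regardless of how the \CNOT gates are scheduled (even linear \CNOT-depth is harmless). Hence only two layers contain non-Clifford gates, giving $T$-depth at most $2$, while the $T$-count is $2k+O(1)=O(k)$, matching the claim.

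The remaining point is the hypothesis $k\geq 3$ or two borrowed ancillas, which traces back to the cited impossibility of synthesizing an isolated $c(S)$ exactly with no ancilla. In the construction above the parities are stored in place by reusing the target qubits and the fanout is undone, so any auxiliary qubits are restored to their initial state and may genuinely be borrowed; the only delicate issue is the exact global-phase bookkeeping of the individual controlled-$S$ blocks, which is precisely what the cited obstruction governs. Under the stated hypothesis the spare qubits — the other targets when $k\geq 3$, or the two borrowed ancillas when $k<3$ — supply the workspace needed to discharge that bookkeeping while leaving every non-Clifford layer parallel. I expect this reconciliation to be the main obstacle: the depth and count bounds follow immediately from the factorization, but one must check that the fused $T^{k}$ together with the parallel fanout/uncompute pattern remains compatible with the exact-synthesis constraint that forces the ancilla hypothesis, rather than secretly demanding non-borrowable space.
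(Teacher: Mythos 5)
Your construction is correct, but it takes a genuinely different route from the paper's, and is in fact stronger. The paper proves this lemma via a toggle-detection identity that realizes a single $c(S)$ in constant $T$-depth using two borrowed ancillas, and then partitions the $k$ targets into three groups so that each group can borrow qubits from the other two; that partition is the sole source of the hypothesis ``$k\geq 3$ or two borrowed ancillas,'' and it costs $T$-depth $54$ and $T$-count $42k$. You instead fuse the standard phase-gadget identity $\mathrm{CS}=(T\otimes T)\,\mathrm{CNOT}\,(I\otimes T^\dagger)\,\mathrm{CNOT}$ across all targets and parallelize it with a Clifford fanout of the control; the phase bookkeeping is exact, since the circuit applies precisely $e^{\frac{i\pi}{4}(ka+\sum_j x_j-\sum_j(a\oplus x_j))}=i^{a\sum_j x_j}$ and $T^k$ is Clifford times at most one $T$, giving $T$-depth $2$ and $T$-count $2k+1$ with no ancillas and no restriction on $k$. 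Your closing paragraph worries about reconciling this with the cited exact-synthesis obstruction, but there is nothing to reconcile: the $k=1$ instance of your own identity is already an ancilla-free exact Clifford+T circuit for $c(S)$ on two qubits, so no hidden workspace is consumed. The Giles--Selinger determinant obstruction genuinely applies to $c(T)$ (whose determinant $e^{i\pi/4}$ is unreachable on a small number of qubits) but not to $c(S)$ (determinant $i$, reachable on two qubits); the lemma's hypothesis is an artifact of the paper's toggle-detection proof rather than a necessity, so you can simply delete that paragraph. What the paper's heavier method buys is uniformity: toggle detection extends to $c(T^{\otimes k})$ (Lemma~\ref{thm:ctrltdepth}), where your phase-gadget approach breaks down because decomposing the phase $e^{i\pi ax/4}$ in the same way would require $e^{i\pi/8}$-phase (square-root-of-$T$) gates that are not available in the Clifford+T gate set.
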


\begin{proof}
    We have the following generalization of toggle-detection for $c(S)$, which demonstrates that $c(S)$ can be implemented\footnote{This follows from a similar argument to Lemma \ref{thm:togglehermitian}, except that the $V^{2(x_2 - x_1x_2)}$ term which cannot be eliminated when $V^2 \neq I$ is instead expanded recursively. To verify its correctness, see the following implementation using Quirk: \href{https://tlaakkonen.github.io/static/quirk-ctrls.html}{https://tlaakkonen.github.io/static/quirk-ctrls.html}} in constant $T$-depth with two borrowed ancillas:
    \ctikzfig{toggle-detection-controls}
    Similar to Lemma \ref{thm:togglehermitian}, split the set of target qubits into three groups. For each group, apply the above identity to each $c(S)$ operation, borrowing two independent ancillas from the remaining qubits, one from each of the non-participating groups. Each group can be performed in constant $T$-depth, since the borrowed and target qubits are all distinct, and the only interaction with the control qubit is via Clifford gates. Since a Toffoli gate can be implemented with $7$ $T$-gates in $T$-depth $3$ without ancillas \cite{Amy2013Meet}, the $T$-depth is at most $54$ and the $T$-count is $42k$.
\end{proof}

Combined with the canonical form of \cite{aaronson2004improved}, this is enough to construct controlled Clifford circuits with $O(n)$ $T$-count and constant $T$-depth.

\begin{theorem}
    \label{thm:ctrlclifford}
    Given a Clifford circuit $U$ on $n \geq 3$ qubits, $c(U)$ can be constructed with constant $T$-depth and $O(n)$ $T$-count, without ancillas.
\end{theorem}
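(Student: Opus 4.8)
The plan is to reduce the problem to controlling each of a constant number of elementary layers. First I would invoke the canonical form of \cite{aaronson2004improved}, which decomposes any $n$-qubit Clifford $U$ into $O(1)$ rounds, where each round is either a layer of Hadamards on some subset of qubits, an arbitrary CNOT circuit, or a layer of phase gates $\prod_i S_i^{c_i}$. Applying the identity $c(AB)=c(A)c(B)$ repeatedly, $c(U)$ factors as a product of controlled rounds, all sharing the single control qubit of $c(U)$. Since there are only $O(1)$ rounds, it suffices to show that each controlled round can be implemented in constant $T$-depth and $O(n)$ $T$-count without clean ancillas; the totals then follow by summation.

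For the CNOT rounds I would apply Theorem~\ref{thm:ctrlcnotlowdepth} directly: each $c(C)$ has Toffoli-depth at most $12$ and Toffoli-count at most $2n$ without ancillas, and since a single Toffoli compiles to $7$ $T$-gates in $T$-depth $3$~\cite{Amy2013Meet}, this gives constant $T$-depth and $O(n)$ $T$-count. For the phase rounds, note that $S^2=Z$ and $c(Z)=\text{CZ}$ is Clifford, so the only non-Clifford content of $c(\prod_i S_i^{c_i})$ is a $c(S^{\pm 1})$ on the qubits with odd $c_i$, up to a Clifford layer of $\text{CZ}$ gates; this is exactly the setting of Lemma~\ref{thm:ctrlsdepth}, yielding constant $T$-depth and $O(n)$ $T$-count. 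The hypothesis $n \geq 3$ ensures the borrowed ancillas needed by that lemma can always be drawn from the remaining qubits of the circuit (processing the few qubits of an unusually small round sequentially if the three-group splitting of Lemma~\ref{thm:ctrlsdepth} does not apply), so no clean ancillas are introduced.

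The one ingredient not already supplied is the controlled Hadamard round, which I expect to be the main obstacle, since $c(H)$ is genuinely non-Clifford and the $k$ copies in a round all share the same control. I would handle it through the exact identity $H = R_y(\tfrac{\pi}{2})\,Z$, with $R_y(\theta)=e^{-i\theta Y/2}$, which gives $c(H) = c(R_y(\tfrac{\pi}{2}))\cdot \text{CZ}$. Writing $c(R_y(\tfrac{\pi}{2}))$ on target $i$ as $R_y(\tfrac{\pi}{4})_i$, a $\text{CNOT}$ from the control to $i$, then $R_y(-\tfrac{\pi}{4})_i$, then another $\text{CNOT}$, reduces the non-Clifford content to the two single-qubit rotations $R_y(\pm\tfrac{\pi}{4})$ on target $i$, each Clifford-conjugate to $T^{\pm 1}$ (up to a global phase that cancels between the two) and so costing one $T$-gate. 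The crucial observation is that these $T$-gates act only on target $i$, while the control qubit is touched solely by the Clifford $\text{CNOT}$s and $\text{CZ}$. Hence across the whole round the $T$-gates lie on distinct qubits and can be aligned into a constant number of parallel $T$-rounds, giving constant $T$-depth, $O(n)$ $T$-count, and no ancillas.

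Combining the three round types over the $O(1)$ rounds of the canonical form then yields $c(U)$ in constant $T$-depth and $O(n)$ $T$-count, using only borrowed ancillas taken from the circuit's own qubits. The subtleties I would check most carefully are the parallelizability of the Hadamard round — specifically that routing every $c(H)$ through the shared control with Clifford gates keeps all $T$-gates on distinct data qubits — and the bookkeeping guaranteeing that the $n \geq 3$ condition of Lemma~\ref{thm:ctrlsdepth} supplies the borrowed ancillas for the phase rounds.
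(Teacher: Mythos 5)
Your proposal is correct and follows essentially the same route as the paper: decompose $U$ via the Aaronson--Gottesman layered form, control the CNOT layers with Theorem~\ref{thm:ctrlcnotlowdepth}, reduce the phase layers to $c(S^{\otimes k})$ plus Clifford $\mathrm{CZ}$s handled by Lemma~\ref{thm:ctrlsdepth}, and implement each $c(H)$ with two $T$-gates touching the control only through Cliffords. Your explicit derivation of the $c(H)$ gadget via $H=R_y(\tfrac{\pi}{2})Z$ is simply a construction of the two-$T$ circuit the paper displays without derivation, so there is no substantive difference.
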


\begin{proof}
    In \cite[Theorem 8]{aaronson2004improved}, it is shown that every Clifford circuit can be written as a sequence of layers H-C-P-C-P-C-H-P-C-P-C, where each layer contains only gates of that type (H corresponding to $H$ gates, P corresponding to $S$ gates, and C corresponding to CNOT gates). We will handle each type of layer separately:
    \begin{itemize}
        \item For $H$ gates, since $H^2 = I$ such a layer can be collapsed to depth $1$. Each $c(H)$ gate can be implemented using $2$ $T$-gates, in a way that is trivially parallelized to a $T$-depth of $2$, since the only interaction with the control qubit is via a Clifford gate:
        $$\tikzfig{ctrlh-circuit}$$
        \item For $S$ gates, since $S^2 = Z$ such a layer can be written as a depth-$1$ layer of $S$ followed by a layer of $Z$. Controlling the $Z$ layer is entirely Clifford, and controlling the $S$ layer follows from Lemma \ref{thm:ctrlsdepth}.
        \item For CNOT gates, this follows directly from Theorem \ref{thm:ctrlcnotlowdepth}.
    \end{itemize}
    Each of these layers can be controlled in constant $T$-depth and $O(n)$ $T$-count, and hence the whole circuit. In particular, we have $T$-depth at most $280$ and $T$-count at most $182n$.
\end{proof}

\subsection{Controlling unitaries with constant $T$-depth}
\label{subsec:controlledunitary}

To tackle Clifford+T circuits, we can generalize Lemma \ref{thm:ctrlsdepth} to a controlled layer of $T$ gates. This requires the circuit to contain at least four qubits, because $c(T)$ cannot be synthesized in the Clifford+T gate set using fewer than three borrowed ancillas, for essentially the same reason as the above case \cite[Corollary 2]{giles2013exact}. 

\begin{lemma}
    \label{thm:ctrltdepth}
    $c(T^{\otimes k})$ can be implemented in constant $T$-depth and $O(k)$ $T$-count whenever $k \geq 4$ or three borrowed ancillas are available.
\end{lemma}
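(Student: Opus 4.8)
The plan is to follow the same two-step strategy used in Lemma~\ref{thm:ctrlsdepth}: first establish a single-target toggle-detection identity that implements one copy of $c(T)$ using borrowed ancillas, in which the control qubit is touched only by Clifford (fanout) gates, and then parallelize over the $k$ targets by a grouping argument. Essentially all of the difficulty is in the single-target identity, where the recursive expansion described in the footnote to Lemma~\ref{thm:ctrlsdepth} must be pushed one level deeper.

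For the single-target identity I would start from the Hermitian toggle-detection circuit of Lemma~\ref{thm:togglehermitian} applied with $V = T$, using a first borrowed ancilla in state $\ket{y_1}$ together with the control $\ket{x_1}$. Just as in that lemma, the net action on the target is $T^{y_1}T^{x_1 \oplus y_1} = T^{x_1 + 2y_1(1 - x_1)} = T^{x_1}\, S^{\,y_1(1 - x_1)}$, except that now the residual factor $T^{2y_1(1 - x_1)} = S^{\,y_1(1 - x_1)}$ does not vanish, because $T^2 = S \neq I$. The key point is that this residual is again a controlled layer of the \emph{squared} gate $T^2 = S$, with the control still entering only through the ancilla and the shared control line, so it is of exactly the type treated in Lemma~\ref{thm:ctrlsdepth}. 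I would therefore cancel it by the same recursion, one level deeper: a second borrowed ancilla reduces the $S$-residual to its square $S^2 = Z$, and a third borrowed ancilla reduces the $Z$-residual to $Z^2 = I$, which vanishes. After these three expansions the only non-Clifford gates left are a constant number of disjoint Toffoli-type gates acting on the target together with its three ancillas; each is realized in constant $T$-depth and $O(1)$ $T$-count \cite{Amy2013Meet}, every coupling to the control is a CNOT fanout, and each borrowed ancilla is returned to its initial state. This is what forces three borrowed ancillas, matching the lower bound of \cite[Corollary 2]{giles2013exact}.

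To obtain $c(T^{\otimes k})$ in constant $T$-depth I would then parallelize as in Lemma~\ref{thm:ctrlsdepth}, but with four groups instead of three. Partition the $k \geq 4$ targets into four groups of nearly equal size and process them in four sequential rounds; when a group is active, each of its targets applies the single-target identity above, borrowing three distinct idle qubits — drawn from the three non-participating groups — as its ancillas. Since at most about $k/4$ targets are active in a round while roughly $3k/4$ qubits sit idle, there are enough distinct idle qubits to give every active target three private ancillas (this is precisely where $k \geq 4$ is used; a constant number of extra rounds absorbs any rounding when $4 \nmid k$, and for the small constant values of $k$ the bound is trivial). All single-target identities in a round then act on disjoint qubits apart from the shared control and so run in constant $T$-depth; four rounds of constant $T$-depth give constant $T$-depth overall, with $O(1)$ $T$-gates per target and hence $O(k)$ $T$-count. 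If three borrowed ancillas are instead supplied externally, a single round suffices.

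The main obstacle is the bookkeeping of the recursive expansion: one must verify that the residual produced at each of the three levels really is a controlled layer of the squared gate on the target and the ancillas introduced so far, so that the next borrowed ancilla applies cleanly and — crucially — so that every intermediate ancilla is uncomputed back to its borrowed state, and that the intermediate multiply-controlled residuals do not secretly demand ancillas or $T$-depth beyond the three-level budget. I expect this is cleanest to confirm via the explicit circuit identity, exactly as in the Quirk verification accompanying Lemma~\ref{thm:ctrlsdepth}; the remaining $T$-depth and $T$-count accounting, together with the four-group packing argument, are then routine.
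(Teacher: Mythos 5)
Your overall architecture matches the paper's: a single-target identity for $c(T)$ with three borrowed ancillas in which the control is touched only by Cliffords, followed by a four-group packing argument; the packing step (including your handling of $4 \nmid k$) is fine and is essentially what the paper does. The gap is in the single-target identity. After your first toggle-detection level, the two factors $T^{y_1}$ and $T^{x_1 \oplus y_1}$ are controlled-$T$ gates from the (XORed) ancilla to the target. These are \emph{not} ``Toffoli-type gates'': by the very obstruction you cite \cite[Corollary 2]{giles2013exact}, a controlled-$T$ cannot be exactly synthesized over Clifford+T without three further borrowed ancillas. Your recursion only discharges the residual $S$-power factor and never these first-level factors, and recursing on them is circular --- toggle-detecting a controlled-$T$ always produces two more controlled-$T$'s with respect to a different control. (A similar issue afflicts the intermediate $CCS$-type residuals, which also fail the Giles--Selinger determinant condition on their own qubits.) So the derivation as described cannot terminate in ``a constant number of disjoint Toffoli-type gates.''

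This is a missing idea rather than bookkeeping. One can see that no patch within your framework suffices: any circuit built from CNOTs on the control together with exactly-synthesizable diagonal gates on the remaining qubits ($T$ on linear forms, $CS$, $CCZ$, \dots) contributes only \emph{even} coefficients to the monomial $x_1 t$ in the phase polynomial mod $8$, whereas $c(T)$ requires coefficient $1$. Hence the true identity must involve non-diagonal structure on the borrowed ancillas (or otherwise leave the toggle-detection paradigm), which is precisely what the paper's explicit, computer-verified circuit supplies and what your derivation does not produce. The contrast with Lemma \ref{thm:ctrlsdepth} is instructive: there the first-level factors are $CS$ gates, which \emph{are} exactly synthesizable on two qubits, so the recursion closes; for $c(T)$ it does not.
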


\begin{proof}
    The following identity demonstrates that $c(T)$ can be implemented\footnote{This was derived similarly to Lemma \ref{thm:ctrlsdepth}. To verify its correctness, see the following implementation using Quirk: \href{https://tlaakkonen.github.io/static/quirk-ctrlt.html}{https://tlaakkonen.github.io/static/quirk-ctrlt.html}} in constant $T$-depth with three borrowed ancillas:
    \ctikzfig{toggle-detection-controlt2}
    Similarly to Lemma \ref{thm:ctrlsdepth}, split the set of target qubits into four groups, and apply the above identity to each $c(T)$ operation, borrowing three qubits each from the remaining groups. This has $T$-depth at most $144$ and $T$-count $78k$.
\end{proof}

Given this, we can construct the controlled version of any Clifford+T unitary $U$ with only a constant multiplicative factor overhead in $T$-depth and a $T$-count overhead that is linear in the $T$-count of the original unitary and the number of qubits. This is approximately optimal in terms of $T$-depth (assuming an optimal input circuit), since any circuit for $c(U)$ can be converted into a circuit for $U$ by taking the control qubit to be an ancilla and fixing its state as $\ket{1}$, but may not be optimal in terms of $T$-count.

\begin{theorem}
    \label{thm:cliffordt}
    Given a unitary $U$ on $n \geq 4$ qubits with $T$-depth $D$ and $T$-count $C$, $c(U)$ can be constructed with $T$-depth $O(D)$ and $T$-count $O(C + n)$, without ancillas.
\end{theorem}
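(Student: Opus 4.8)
The plan is to decompose $U$ along its $T$-depth structure, push every Clifford part to one side, and then control only a single Clifford together with the individual $T$-layers. Concretely, by the definition of $T$-depth we may write $U = W_D L_D W_{D-1} L_{D-1} \cdots W_1 L_1 W_0$, where each $L_i$ is a layer of $T$-gates acting on disjoint qubits (say $k_i$ of them, so that $\sum_i k_i = C$) and each $W_j$ is a Clifford circuit. The naive approach — controlling each factor separately via $c(UV) = c(U)c(V)$ and applying Theorem~\ref{thm:ctrlclifford} to every $c(W_j)$ — fails the count bound, since it costs $O(n)$ $T$-gates per layer and hence $O(Dn)$ in total. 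Circumventing this is the crux of the argument.

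Instead I would use the conjugation identity $c(WVW^{-1}) = (I \otimes W)\,c(V)\,(I \otimes W^{-1})$ to trade controlled Cliffords for \emph{uncontrolled} (hence $T$-free) Cliffords. Setting $R_i = W_D W_{D-1} \cdots W_i$ and $W = W_D W_{D-1} \cdots W_0$, a short induction (pushing each $W_j$ rightward past all the $T$-layers to its left) rewrites $U = \tilde L_D \tilde L_{D-1} \cdots \tilde L_1\, W$ with $\tilde L_i = R_i L_i R_i^{-1}$. Applying $c(\cdot)$ then gives
$$ c(U) = \Big[\textstyle\prod_{i=D}^{1} (I \otimes R_i)\, c(L_i)\, (I \otimes R_i^{-1})\Big]\, c(W), $$
so that the only genuinely controlled pieces are the $D$ controlled $T$-layers $c(L_i)$ and the single controlled Clifford $c(W)$, while the conjugating factors $R_i, R_i^{-1}$ contribute no $T$-gates.

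Now I would bound each piece. The single $c(W)$ is handled by Theorem~\ref{thm:ctrlclifford} in constant $T$-depth and $O(n)$ $T$-count. Each $c(L_i)$ is a controlled layer of $k_i$ parallel $T$-gates, so Lemma~\ref{thm:ctrltdepth} implements it in constant $T$-depth and $O(k_i)$ $T$-count, using as borrowed ancillas the $n - k_i$ qubits that are idle during that layer (every other data qubit is untouched at that moment and may be borrowed and then restored). Summing over the construction, the $T$-depth is $O(1)$ per layer across $D$ layers plus $O(1)$ for $c(W)$, i.e.\ $O(D)$; the $T$-count is $\sum_i O(k_i) + O(n) = O(C) + O(n) = O(C+n)$, and no external ancillas are introduced.

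The main obstacle is precisely this amortization: controlling the Cliffords layer by layer is too expensive, and the conjugation trick is what collapses all $D+1$ Clifford layers into a single controlled Clifford while leaving the $T$-layers individually cheap to control. A secondary technical point is guaranteeing enough borrowed ancillas for Lemma~\ref{thm:ctrltdepth}: when $k_i \geq 4$ its ancilla-free branch applies, and when $k_i \leq 3$ one needs $n - k_i \geq 3$ idle qubits, which holds whenever $n \geq 6$; the finitely many remaining combinations (namely $n \in \{4,5\}$ with $k_i$ close to $n$) involve controlling at most three parallel $T$-gates on a bounded number of qubits, and can be dispatched by a direct constant-size construction without affecting the asymptotics.
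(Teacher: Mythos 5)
Your proposal is correct and follows essentially the same route as the paper: decompose $U$ into $T$-layers interleaved with Clifford layers, control only the $T$-layers via Lemma~\ref{thm:ctrltdepth}, and account for all the Cliffords with a single controlled Clifford handled by Theorem~\ref{thm:ctrlclifford}. The paper leaves the Cliffords in place uncontrolled and fixes the control-off branch with one \emph{anti}-controlled Clifford $V = C_D\cdots C_1C_0$ rather than conjugating each $T$-layer, but after telescoping your $R_i$ factors the two circuits coincide up to this bookkeeping, and your explicit treatment of the small-$k_i$ borrowed-ancilla edge cases is if anything more careful than the paper's.
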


\begin{proof}
    By definition, $U$ can be written as a sequence of layers $U = C_DT^{\otimes k_D}\cdots C_1T^{\otimes k_1}C_0$ where each $C_i$ is a Clifford circuit and $\sum_i k_i = C$. Now, supposing we replace each $T^{\otimes k_i}$ with $c(T^{\otimes k_i})$, then whenever the control is disabled, the circuit will become $V = C_D\cdots C_1C_0$, which is Clifford. Hence by adding an anti-controlled $c(V)$ operation, $c(U)$ can be written as follows:
    \ctikzfig{controlled-unitary}
    Since $c(V)$ can be implemented in constant $T$-depth and $T$-count $O(n)$ from Theorem \ref{thm:ctrlclifford}, and each $c(T^{\otimes k_i})$ can be implemented in constant $T$-depth and $T$-count $O(k_i)$ from Lemma \ref{thm:ctrltdepth}, the overall circuit has $T$-depth at most $144D + 280$ and $T$-count at most $78C + 182n$.
\end{proof}

The reason the constant factors in Theorems \ref{thm:ctrlclifford} and \ref{thm:cliffordt} are so large is because Lemmas \ref{thm:ctrlsdepth} and \ref{thm:ctrltdepth} do not use any ancilla qubits. Given access to $O(k)$ ancillas, we can parallelize $c(S^{\otimes k})$ operations much more efficiently:
\ctikzfig{ctrls-par-ancillas}
This has $T$-count $3k$ and $T$-depth $2$. Combining this with Theorem \ref{thm:cnotoptimaldepth} and applying the $T$-count 4 and $T$-depth 1 construction of the Toffoli gate with two ancillas due to Jones \cite{Jones2013} would improve Theorem \ref{thm:ctrlclifford} to $T$-depth at most $17$ and $T$-count at most $36n$. Similarly for $c(T^{\otimes k})$ we have
\ctikzfig{ctrlt-par-ancillas}
with $T$-count $9k$ and $T$-depth $3$; this would improve Theorem \ref{thm:cliffordt} to $T$-depth at most $3D + 17$ and $T$-count at most $9C + 36n$.

\section{Applications}
\label{sec:applications}

\subsection{Catalyzing arbitrary-angle rotations}
\label{sec:catalysis}

There are two principal approaches to synthesizing arbitrary-angle rotations in the Clifford+T gate set: those based on the Solovay-Kitaev theorem \cite{dawson2006solovay}, and those based on catalysis or phase kickback \cite{kim2025catalytic,Gidney2018halvingcostof}. Here, extending the method of \cite{kim2025catalytic}, we will show that our constant $T$-depth construction for controlled CNOT circuits can be used to improve the catalysis-based approach, leading to the following result: a single-qubit rotation by any angle $\alpha$ can be catalyzed to precision $\epsilon$ with $T$-depth exactly $1$, using a catalyst state on $O(\log(\frac{1}{\epsilon}))$ qubits. Moreover, this catalyst state is \emph{universal} in the sense that it does not depend on $\alpha$. This construction is therefore an improvement both on \cite{kim2025catalytic}, which needs a $O(\log(\frac{1}{\epsilon})^2)$-qubit catalyst state when it is required to be universal, and arithmetic-based methods \cite{Gidney2018halvingcostof}, which have a universal $O(\log(\frac{1}{\epsilon}))$-qubit catalyst state but require a $T$-depth of $O(\log\log(\frac{1}{\epsilon}))$.

Given a precision $\epsilon \leq 1$, let us pick $n$ such that $\frac{2\pi}{2^n - 1} < \epsilon$. For $n > \log_2\frac{4}{4-\pi} \approx 2.2$ we have $\frac{2\pi}{2^n - 1} < \frac{8}{2^n}$ and so $n = \lceil\log_2\frac{8}{\epsilon}\rceil \geq 3$ suffices. Then pick a irreducible polynomial $f(x) \in \mathbb{F}_2[x]$ with degree $n$ such that $x$ is a primitive element in $\mathrm{GF}(2^n) \equiv \mathbb{F}_2[x] / \langle f(x) \rangle$ (this is always possible). Let $C_f$ be the companion matrix of this polynomial over $\mathbb{F}_2$, and $U_f$ the unitary corresponding to the CNOT circuit with parity matrix $C_f$. It is shown in \cite{kim2025catalytic} that $U_f$ has a complete set of eigenvectors given by
$$ \ket{\psi_0} = \ket{0 \cdots 0}, \qquad \ket{\psi_k} = \frac{1}{\sqrt{2^n - 1}} \sum_{j = 0}^{2^n - 2} \tilde{\omega}^{-jk} \ket{C_f^jv} \text{ for } k \in \{1, \dots, 2^n - 1\}$$
where $v = \begin{pmatrix} 1 & 0 & \cdots & 0 \end{pmatrix}^T$, $\tilde{\omega} = e^{\frac{2\pi i}{2^n - 1}}$, and the eigenvalue corresponding to $\ket{\psi_k}$ is $\tilde{\omega}^k$. Now for any angle $\alpha \in [0, 2\pi)$ and integer $k$ which is coprime to $2^n - 1$, we can find $d, d' \in \{0, \dots, 2^n - 2\}$ such that $|\alpha - \frac{2 \pi d}{2^n - 1}| < \epsilon$ and $d'k \equiv d \pmod{2^n - 1}$. Let $U_{f,d'}$ be the unitary corresponding to a CNOT circuit with parity matrix $C_f^{d'}$, then we must have $U_{f,d'} = U_f^{d'}$, and so:
\begin{align*}
    c(U_{f,d'}) [(a \ket{0} + b \ket{1})~\ket{\psi_k}] &= a\ket{0}\ket{\psi_k} + b\ket{1}U_{f,d'}\ket{\psi_k} \\
        &= a\ket{0}\ket{\psi_k} + be^{i\frac{2\pi d'k}{2^n - 1}}\ket{1}\ket{\psi_k}\\
        &= a\ket{0}\ket{\psi_k} + be^{i\frac{2\pi d}{2^n - 1}}\ket{1}\ket{\psi_k}\\
        &\approx a\ket{0}\ket{\psi_k} + be^{i\alpha}\ket{1}\ket{\psi_k} = (R_Z(\alpha) \otimes I) [(a \ket{0} + b \ket{1})~\ket{\psi_k}]
\end{align*}
Therefore, $c(U_{f,d'})$ catalyzes the rotation $R_Z(\alpha)$ up to precision $\epsilon$ on its control qubit, using the universal catalyst state $\ket{\psi_k}$ (which depends only on $\epsilon$, not $\alpha$). Moreover, as $U_{f,d'}$ is a CNOT circuit, $c(U_{f,d'})$ can be implemented with $T$-depth exactly $1$ by Theorem \ref{thm:cnotoptimaldepth}. By including the required $O(n)$ ancilla qubits in the catalyst state, we obtain an $O(\log(\frac{1}{\epsilon}))$-qubit catalyst state overall. The circuits implementing these operations can be determined classically in $O(\mathrm{polylog}(\frac{1}{\epsilon}))$ time by applying exponentiation by squaring to $C_f$. 

To prepare a catalyst state, note that, as discussed in \cite{kim2025catalytic}, applying quantum phase estimation to $U_f$ from the starting vector $\ket{10\cdots 0}$ will generate a state $\ket{\psi_{k'}}$ with uniformly random $k'$. This procedure has a $T$-depth of $\tilde{O}(\log\frac{1}{\epsilon})$ by applying Theorem \ref{thm:cnotoptimaldepth} to each $c(U_f^{2^t})$ operation, where $\tilde{O}(.)$ hides $\log\log(\frac{1}{\epsilon})$ factors that are due to the quantum Fourier transform. This procedure can be repeated until $k'$ is coprime to $2^n - 1$; on average, this will take $O(\log\log(\frac{1}{\epsilon}))$ iterations. In \cite{kim2025catalytic}, this was then further refined to generate a specific $\ket{\psi_k}$ with $k$ fixed using a coherent form of Shor's algorithm. However, since our construction above is agnostic to the choice of $k$, so long as it is coprime to $2^n - 1$, we can defer the compilation of the rest of the circuit until after $k$ has been picked randomly (in effect, producing a dynamic circuit). The $T$-depth required to prepare the catalyst state is thus $\tilde{O}(\log(\frac{1}{\epsilon}))$ on average, and this implies that a $q$-qubit Clifford+$R_Z$ circuit containing $r$ $Z$-rotation gates can be compiled with precision $\epsilon$ to a dynamic Clifford+T circuit with $T$-depth $r + \tilde{O}(\log\frac{r}{\epsilon})$ and $q + O(\log\frac{r}{\epsilon})$ qubits. 

\subsection{Variations on the swap test}

The Hadamard test is a simple quantum algorithm that measures $\mathrm{Tr}(\rho O)$ for unitary $O$ by a single application of $c(O)$. Thus, if $O$ is Clifford, Theorems \ref{thm:cnotoptimaldepth} and \ref{thm:ctrlclifford} show that this can be done in constant $T$-depth, or $T$-depth exactly $1$ for a CNOT circuit. A special case of the Hadamard test is the swap test, which measures the overlap between two quantum states by setting $O = \mathrm{SWAP}$, noting that $\mathrm{Tr}((\rho \otimes \sigma) \cdot \mathrm{SWAP}) = \mathrm{Tr}(\rho\sigma)$. This can be generalized \cite{Quek2024} to compute $\mathrm{Tr}(\rho_1\rho_2\cdots \rho_N)$, as well as other quantities such as Shatten $p$-norms, by setting $O$ to be a cyclic shift of qubits; this is also used as a subroutine for parallelizing quantum signal processing \cite{Martyn2025}. An implementation with overall constant depth and a $T$-depth of $2$ was given in \cite[Section 3]{Quek2024}, and Theorem \ref{thm:cnotoptimal} implies that this is approximately optimal in terms of Toffoli-count. However, since any permutation of qubits can be realized as a CNOT circuit, Theorem \ref{thm:cnotoptimaldepth} shows that this can be improved to a $T$-depth of $1$.

\section{Discussion}

In this paper, we have shown how to construct the controlled versions of CNOT and Clifford circuits with approximately optimal non-Clifford depths and gate counts. Using this, we showed that controlled Clifford+T circuits can be constructed with only a constant-factor $T$-depth overhead relative to their uncontrolled counterparts. In fact, we can draw a similar conclusion for Clifford+$R_Z$ circuits, since in Section \ref{sec:catalysis} we use our constructions to show that arbitrary-angle $Z$-rotations can be implemented approximately in $T$-depth exactly $1$ given access to a universal catalyst state. Two open questions are given in \cite{kim2025catalytic} regarding constant $T$-depth catalysis, the first of which is whether the size of the catalyst state can be reduced from $O(\log(\frac{1}{\epsilon})^2)$ to $O(\log\frac{1}{\epsilon})$ where $\epsilon$ is the desired precision. Our construction resolves this question positively. 

The second open question of \cite{kim2025catalytic} is as follows: can rotation angles of the form $\frac{2\pi}{2^k}$ be catalyzed \emph{exactly} in constant $T$-depth with a catalyst state of size $O(k)$? These angles are desirable because they are used directly in constructions of the quantum Fourier transform and multi-qubit Toffoli gates. Unfortunately, our approach cannot tackle this: to achieve angles of the form $\frac{2\pi}{q}$, we would require the existence of $A \in GL_n(\mathbb{F}_2)$ with order $q$, from which we could derive an $n$-qubit catalyst state. In the case of $q = 2^k$, we are forced\footnote{Supposing that $q = 2^k$, let $\mu_A(x)$ be the minimal polynomial of $A$. Since $A^q = A^{2^k} = I$, then $\mu_A(x) \mid x^{2^k} - 1$. But as $x^{2^k} - 1 \equiv (x + 1)^{2^k} \pmod{2}$ we must have $\mu_A(x) = (x + 1)^d$ for some $d \leq n$, and so $\mu_A(x) \mid (x + 1)^{2^{\lceil \log_2 n \rceil}}$. Finally since $(x + 1)^{2^{\lceil \log_2 n \rceil}} \equiv x^{2^{\lceil \log_2 n \rceil}} - 1 \pmod{2}$, then $A^{2^{\lceil \log_2 n \rceil}} = I$, and so the order $q = 2^k$ must divide $2^{\lceil \log_2 n \rceil}$, which implies $k \leq \lceil \log_2 n \rceil$.} to have $n \geq \frac{q}{2}$, which is much larger than $O(k) = O(\log q)$. Furthermore, moving from controlled CNOT circuits to controlled Clifford circuits does not improve matters, since $n$-qubit Clifford circuits correspond \cite{Dehaene2003} (up to phases) to elements of the symplectic group $Sp(2n, \mathbb{F}_2) \subset GL_{2n}(\mathbb{F}_2)$, which would lead to $n \geq \frac{q}{4}$.

While we have shown an exactly optimal construction for one parameter (the $T$-depth of controlled CNOT circuits using $O(n)$ ancillas), substantial improvement may be possible in our other constructions. In particular, considering alternative Clifford normal forms may lead to a better construction for controlled Clifford circuits, and a more careful accounting of borrowed ancillas may allow for better constant factors in the ancilla-free constructions.

\subsection*{Acknowledgements}

TL acknowledges support from the U.S. Department of Energy under Award No. DE-SC0020264, and thanks John van de Wetering for advice on lower bounds for non-Clifford operations. IK acknowledges support from the Advanced Scientific Computing Research program in the Office of Science of the Department of Energy (DE-SC0026109).

\bibliographystyle{quantum}
\bibliography{refs}

\appendix

\section{Proof that Theorem \ref{thm:cnotcontrolled} is Approximately Optimal}
\label{app:asymptoticproof}

To show that Theorem \ref{thm:cnotcontrolled} is approximately optimal, we will use the concept of \emph{unitary stabilizer nullity} \cite{jiang2023lower}, which is defined in terms of the unitary Pauli function $P_U(A, B)$. We will call the Pauli strings $A$ for which there exists a $B$ such that $|P_U(A, B)| = 1$ the generalized stabilizers of $U$. This function is given for an $n$-qubit unitary $U$ and Pauli strings $A$ and $B$ as:
$$ P_U(A, B) = \frac{1}{2^n} \trace(A U B U^\dagger) $$
Throughout, we define the Pauli matrices $A_i$ and $B_i$ by
$$ A = \bigotimes_{i=1}^n A_i~, \quad B= \bigotimes_{i=1}^n B_i$$
as well as $A_{[2,n]} = \bigotimes_{i=2}^n A_i$ so that $A = A_1 \otimes A_{[2,n]}$, and likewise $B = B_1 \otimes B_{[2,n]}$. We will now fully characterize $P_{c(C)}$, and begin by showing that the first qubit of any generalized stabilizer must be $I$ or $Z$:
\begin{lemma}
    \label{thm:noxstabs}
    Given a CNOT circuit $C \neq I$, we have $|P_{c(C)}(A, B)| < 1$ if $B_1 \in \{X, Y\}$.
\end{lemma}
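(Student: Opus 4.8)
The plan is to reduce the statement to a structural fact about how $c(C)$ conjugates Pauli operators, and then to exploit that a nontrivial CNOT circuit is ``far'' from being a Pauli. First I would record the standard fact that, with respect to the normalized Hilbert--Schmidt inner product $\langle P, Q\rangle = \frac{1}{2^n}\trace(P^\dagger Q)$, the Pauli strings form an orthonormal basis, so that $P_{c(C)}(A,B) = \langle A, c(C)\,B\,c(C)^\dagger\rangle$ is just the $A$-coefficient of $c(C)Bc(C)^\dagger$ in this basis. Since $c(C)Bc(C)^\dagger$ is Hermitian with unit norm (as $\frac{1}{2^n}\trace(c(C)B^\dagger B c(C)^\dagger) = 1$), Parseval's identity gives $\sum_A |P_{c(C)}(A,B)|^2 = 1$, and hence $|P_{c(C)}(A,B)| = 1$ for some $A$ if and only if $c(C)Bc(C)^\dagger = \pm A$ is a single Pauli string. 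Thus it suffices to show that when $B_1\in\{X,Y\}$, the operator $c(C)Bc(C)^\dagger$ is not $\pm$ a single Pauli.

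Next I would compute this conjugate directly in the block form $c(C) = \ketbra{0}{0}\otimes I + \ketbra{1}{1}\otimes C$, writing $B = B_1\otimes B_{[2,n]}$. When $B_1 \in \{X,Y\}$ the operator $B$ is supported entirely on the off-diagonal blocks $\ketbra{0}{1}$ and $\ketbra{1}{0}$ of the control register, and since $c(C)$ is block-diagonal, so is $c(C)Bc(C)^\dagger$; explicitly its $\ketbra{0}{1}$ block is a nonzero scalar multiple of $B_{[2,n]}C^\dagger$. A single Pauli string with vanishing diagonal blocks must carry $X$ or $Y$ on the control qubit, in which case its $\ketbra{0}{1}$ block is a scalar multiple of a single $(n-1)$-qubit Pauli $P$. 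Matching blocks therefore forces $B_{[2,n]}C^\dagger = \lambda P$ for some scalar $\lambda$ and Pauli $P$. As $B_{[2,n]}$ is itself a (self-inverse) Pauli, this yields $C^\dagger = \lambda\, B_{[2,n]}P$, a scalar times a product of two Paulis, and hence $C$ is a scalar multiple of a single Pauli $R$.

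Finally I would derive a contradiction from $C \propto R$. The unitary $C$ is the permutation $\ket{\vec x}\mapsto\ket{A\vec x}$ with $A$ invertible over $\mathbb{F}_2$ and all matrix entries equal to $0$ or $1$, whereas any scalar multiple of a Pauli $R = \bigotimes_i R_i$ sends $\ket{\vec x}$ to a phase times $\ket{\vec x \oplus \vec r}$, for the fixed shift $\vec r$ recording the $X$/$Y$ positions of $R$. Equating these two actions gives $A\vec x = \vec x \oplus \vec r$ for all $\vec x$; setting $\vec x = 0$ forces $\vec r = 0$, and then $A = I$, so $C = I$, contradicting the hypothesis $C \neq I$. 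I expect this last step to be the main obstacle, in the sense that it is where the CNOT-circuit structure is essential: everything before it is a routine block computation together with Parseval, but ruling out that global and relative phases conspire to make a nontrivial CNOT circuit coincide with a scalar multiple of a Pauli is precisely what the linear-versus-affine comparison accomplishes, and it is the only place where $C \neq I$ is used.
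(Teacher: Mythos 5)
Your proof is correct, but it takes a genuinely different route from the paper. The paper computes $\trace[A\,c(C)\,B\,c(C)^\dagger]$ directly, splitting into the two cases $A_1 \in \{I,Z\}$ and $A_1 \in \{X,Y\}$: the first trace vanishes identically, and the second is bounded by writing $A_{[2,n]}CB_{[2,n]}$ as a diagonal matrix times a nontrivial permutation, whose trace is strictly less than $2^{n-1}$. You instead invoke Parseval over the Pauli basis to reduce the claim to the statement that $c(C)Bc(C)^\dagger$ is not $\pm$ a single Pauli string, read off the $\ketbra{0}{1}$ block to force $C \propto B_{[2,n]}A_{[2,n]}$, and then rule this out by comparing the linear action $\vec x \mapsto M\vec x$ of a CNOT circuit with the affine action $\vec x \mapsto \vec x \oplus \vec r$ of a Pauli. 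The two arguments rest on the same underlying fact --- a nontrivial CNOT permutation cannot have full overlap with any Pauli --- but your reduction buys something the paper's computation does not: it dispatches the $A_1 \in \{I,Z\}$ case for free (the conjugate has vanishing diagonal blocks, so its overlap with any block-diagonal Pauli is zero), and it replaces the fixed-point count of a nontrivial permutation with the cleaner observation that an invertible linear map equal to a translation must be the identity. The paper's version is more self-contained in that it never needs the ``unit Hilbert--Schmidt norm plus one unit coefficient forces a single Pauli'' step, but both are complete; your only implicit step is that $|P_{c(C)}(A,B)| \le 1$ always, which follows immediately from the same Parseval identity you already state.
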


\begin{proof}
    Let $B_1 = Z^bX$. Consider the case when $A_1 = Z^a$. Then:
    \begin{align*}
        \trace[(Z_1^aA_{[2,n]})c(C)&(Z_1^bX_1B_{[2,n]})c(C)^\dagger] \\
        &= \begin{aligned}[t]&\trace[(\bra{0} \otimes A_{[2,n]}) c(C) (Z_1^bX_1B_{[2,n]}) c(C)^\dagger (\ket{0} \otimes I)] \\ &+ (-1)^a\trace[(\bra{1} \otimes A_{[2,n]}) c(C) (Z_1^bX_1B_{[2,n]}) c(C)^\dagger (\ket{1} \otimes I)] \end{aligned}\\
        &= \begin{aligned}[t]&\trace[(\bra{1} \otimes A_{[2,n]}B_{[2,n]})(\ket{0} \otimes I)] \\ &+ (-1)^{a+b}\trace[(\bra{0} \otimes A_{[2,n]}CB_{[2,n]})(\ket{1} \otimes C^\dagger)] \end{aligned}\\
        &= \braket{0}{1} \trace[A_{[2,n]}B_{[2,n]}] + \braket{1}{0}(-1)^{a+b}\trace[A_{[2,n]}CB_{[2,n]}C^\dagger] = 0
    \end{align*}
    Hence, $|P_{c(C)}(A, B)| = 0 < 1$. Now consider the case when $A_1 = Z^aX$. Then:
    \begin{align*}
        \trace[(Z_1^aX_1A_{[2,n]})c(C)&(i^bZ_1^bX_1B_{[2,n]})c(C)^\dagger] \\ &= \begin{aligned}[t]&\trace[(\bra{1} \otimes A_{[2,n]}) c(C) (Z_1^bX_1B_{[2,n]}) c(C)^\dagger (\ket{0} \otimes I)] \\ &+ (-1)^a\trace[(\bra{0} \otimes A_{[2,n]}) c(C) (Z_1^bX_1B_{[2,n]}) c(C)^\dagger (\ket{1} \otimes I)] \end{aligned} \\
        &= \begin{aligned}[t] &\trace[(\bra{0} \otimes A_{[2,n]}CB_{[2,n]})(\ket{0} \otimes I)] \\ &+ (-1)^{a+b}\trace[(\bra{1} \otimes A_{[2,n]}B_{[2,n]})(\ket{1} \otimes C^\dagger)] \end{aligned} \\
        &= \trace[A_{[2,n]}CB_{[2,n]}] + (-1)^{a+b}\trace[A_{[2,n]}B_{[2,n]}C^\dagger]
    \end{align*}
    Since $C$ is a CNOT circuit, it is a permutation matrix, and so $A_{[2,n]}CB_{[2,n]} = DC'$ for some diagonal matrix $D$ and a permutation matrix $C'$. In particular, if $C \neq I$ then we have $C' \neq I$, and hence $\trace[A_{[2,n]}CB_{[2,n]}] < 2^{n - 1}$.  Likewise, we must have $\trace[A_{[2,n]}B_{[2,n]}C^\dagger] < 2^{n - 1}$. Therefore, $|P_{c(C)}(A, B)| < 1$ from the triangle inequality.
\end{proof}

From \cite[Lemma 3]{jiang2023lower}, to determine the generalized stabilizers of $c(C)$, we need to find all those Pauli strings $A$ such that $c(C)Ac(C)^\dagger$ is again a Pauli string. We just showed that this cannot be the case if $A_1 \in \{X, Y\}$, and we will now use this to determine the total number of generalized stabilizers. In particular, for any Pauli string $P$ let $\vec{x}(P)$ and $\vec{z}(P)$ be binary vectors defined as follows
$$ \vec{x}(P)_i = \begin{cases} 1 & P_i \in \{X, Y\} \\ 0 & P_i \in \{I, Z\} \end{cases} \qquad \vec{z}(P)_i = \begin{cases} 1 & P_i \in \{Z, Y\} \\ 0 & P_i \in \{I, X\} \end{cases}$$
then we have:

\begin{lemma}
    \label{thm:eigenstabs}
    Given a CNOT circuit $C$, $c(C)$ has exactly $2^{2\lambda + 1}$ generalized stabilizers, where $\lambda$ is the number of linearly independent eigenvectors of the parity matrix of $C$ over $\mathbb{F}_2$.
\end{lemma}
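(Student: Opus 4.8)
The plan is to count the generalized stabilizers directly, using the characterization already set up in the text: by \cite[Lemma 3]{jiang2023lower}, the generalized stabilizers of $c(C)$ are exactly the Pauli strings $A = A_1 \otimes A_{[2,n]}$ for which $c(C)Ac(C)^\dagger$ is again a Pauli string. Lemma \ref{thm:noxstabs} (which requires $C \neq I$; I take this as a standing assumption, since for $C = I$ one has $c(C) = I$ and all $4^n = 2^{2n}$ Paulis qualify, which does not match the claimed count) already rules out every $A$ with $A_1 \in \{X,Y\}$. So the remaining work is to count those $A$ with $A_1 \in \{I,Z\}$, which I would treat as two separate cases.

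For the conjugation computation I would write $c(C) = \ketbra{0}{0} \otimes I + \ketbra{1}{1} \otimes C$ and expand using $\ketbra{0}{0} = \tfrac12(I + Z)$ and $\ketbra{1}{1} = \tfrac12(I - Z)$. In both cases the result takes the form $\tfrac12\bigl[\,I_1 \otimes (A_{[2,n]} \pm C A_{[2,n]} C^\dagger) + Z_1 \otimes (A_{[2,n]} \mp C A_{[2,n]} C^\dagger)\,\bigr]$, where the top sign corresponds to $A_1 = I$ and the bottom to $A_1 = Z$. Since $C$ is Clifford, $C A_{[2,n]} C^\dagger$ is always $\pm$ a Pauli string; the point I expect to be the main obstacle to phrase cleanly is that this expression collapses to a \emph{single} Pauli string if and only if $C A_{[2,n]} C^\dagger = \pm A_{[2,n]}$. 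Indeed, if $C A_{[2,n]} C^\dagger$ were a Pauli $A' \neq \pm A_{[2,n]}$, the two summands would pair the distinct data-register Paulis $A_{[2,n]}$ and $A'$ with the distinct control operators $I_1$ and $Z_1$, giving a genuine sum of four Paulis; conversely, when $C A_{[2,n]} C^\dagger = \pm A_{[2,n]}$ exactly one summand survives and is a single Pauli ($I_1$ or $Z_1$ times $A_{[2,n]}$).

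Finally I would translate the condition $C A_{[2,n]} C^\dagger = \pm A_{[2,n]}$ into linear algebra over $\mathbb{F}_2$. Writing $M$ for the parity matrix of $C$, conjugation by $C$ sends a data Pauli with $X$-part $\vec{x}(A_{[2,n]})$ and $Z$-part $\vec{z}(A_{[2,n]})$ to the Pauli with $X$-part $M\vec{x}(A_{[2,n]})$ and $Z$-part $M^{-T}\vec{z}(A_{[2,n]})$, so the condition holds precisely when $\vec{x}(A_{[2,n]}) \in \ker(M - I)$ and $\vec{z}(A_{[2,n]}) \in \ker(M^T - I)$. The eigenvectors of $M$ over $\mathbb{F}_2$ are exactly the nonzero vectors of $\ker(M - I)$ (invertibility forces the only eigenvalue to be $1$), so $\dim \ker(M - I) = \lambda$, and $\dim \ker(M^T - I) = \lambda$ as well because $M - I$ and $(M-I)^T = M^T - I$ have equal rank. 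Hence for each of the two admissible values $A_1 \in \{I, Z\}$ there are $2^\lambda \cdot 2^\lambda = 2^{2\lambda}$ valid data Paulis, and summing over $A_1$ gives $2 \cdot 2^{2\lambda} = 2^{2\lambda + 1}$ generalized stabilizers, as claimed.
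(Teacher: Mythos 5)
Your proof is correct and follows essentially the same route as the paper's: invoke Lemma \ref{thm:noxstabs} to restrict to $A_1 \in \{I,Z\}$, reduce the conjugation condition to $CA_{[2,n]}C^\dagger$ preserving the Pauli string, translate that into $\vec{x}$ and $\vec{z}$ lying in the fixed spaces of the parity matrix and its (inverse) transpose, and count $2\cdot 2^{\lambda}\cdot 2^{\lambda}$. If anything you are slightly more careful than the paper on three minor points — the $C = I$ edge case, allowing $CA_{[2,n]}C^\dagger = -A_{[2,n]}$, and using $M^{-T}$ rather than $M^T$ for the $Z$-part — none of which changes the final count.
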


\begin{proof}
    From Lemma \ref{thm:noxstabs}, suppose $A = Z^a\otimes A_{[2,n]}$ is a generalized stabilizer of $c(C)$. We can compute
    \begin{align*}
        c(C) (Z^a \otimes A_{[2,n]}) c(C)^\dagger &= (I \oplus C)(A_{[2,n]} \oplus (-1)^aA_{[2,n]}) (I \oplus C^\dagger) = A_{[2,n]} \oplus (-1)^aCA_{[2,n]}C^\dagger \\ &= (Z^a \otimes A_{[2,n]}) ~c(A_{[2,n]} CA_{[2,n]}C^\dagger)
    \end{align*}
    Clearly, $B_{[2,n]} = CA_{[2,n]}C^\dagger$ must be a Pauli string. Let $M_C$ be the parity matrix corresponding to $C$, then we have $\vec{x}(B_{[2,n]}) = M_C\vec{x}(A_{[2,n]})$ (essentially by definition), and $\vec{z}(B_{[2,n]}) = M_C^T\vec{z}(A_{[2,n]})$ (since conjugating any CNOT with Hadamard gates to change from the $X$ to $Z$ basis swaps the control and target qubits). 

    Since for any Pauli string $P$, $c(P)$ is never a Pauli string except in the trivial case of $P = I$, for $(Z^a \otimes A_{[2,n]})c(A_{[2,n]}B_{[2,n]})$ to be a Pauli string, we must have that $B_{[2,n]} = A_{[2,n]}$. This is accomplished exactly when $\vec{x}(A_{[2,n]})$ is an eigenvector of $M_C$ and $\vec{z}(A_{[2,n]})$ is an eigenvector of $M_C^T$ over $\mathbb{F}_2$ with eigenvalues one or zero (in which case it must be the zero vector, since $M_C$ is invertible). Note that since all eigenvectors over $\mathbb{F}_2$ must have an eigenvalue of zero or one, this covers all the eigenvectors of $M_C$.
    
    Finally, because every matrix is similar to its transpose, the number of linearly independent eigenvectors of $M_C$ must be the same as for $M_C^T$. Let this quantity be called $\lambda$. Then there are $(2^\lambda)^2$ pairs of eigenvectors, and each of these corresponds to a unique choice of $A_{[2,n]}$. We can then choose $a$ to be either $+1$ or $-1$, giving a total of $2^{2\lambda + 1}$ generalized stabilizers.
\end{proof}

We can connect this to the formalism in Theorem \ref{thm:cnotcontrolled} as follows:
\begin{lemma}
    \label{thm:eigenblocks}
    Given a parity matrix $M$, let $c$ be as defined in Theorem \ref{thm:cnotcontrolled}, then $c = \lambda$, where $\lambda$ is the number of linearly independent eigenvectors of $M$ over $\mathbb{F}_2$.
\end{lemma}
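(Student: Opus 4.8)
The plan is to show that both $c$ and $\lambda$ equal the dimension of the eigenspace of $M$ for the eigenvalue $1$. First I would observe that, because $M$ is a parity matrix and hence invertible, $0$ is not an eigenvalue; and since $\mathbb{F}_2 = \{0,1\}$, the only possible nonzero eigenvalue is $1$. Thus every eigenvector of $M$ over $\mathbb{F}_2$ lies in $\ker(M - I)$, and $\lambda = \dim_{\mathbb{F}_2}\ker(M - I)$, i.e.\ the geometric multiplicity of the eigenvalue $1$.

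Next, since the dimension of an eigenspace is invariant under similarity, I would compute $\lambda$ using the generalized Jordan normal form of Theorem~\ref{thm:rationalcanonicalform}, writing $M$ as similar to $\bigoplus_i C_{f_i}$ with each $f_i = q_i^{k_i}$ for an irreducible $q_i \in \mathbb{F}_2[x]$. The kernel of $(M - I)$ then decomposes as a direct sum over the blocks, so $\lambda = \sum_i \dim \ker(C_{f_i} - I)$, and it suffices to evaluate each term.

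For a block with $q_i \neq x+1$, the value $f_i(1) = q_i(1)^{k_i}$ is nonzero (as $q_i(1) \neq 0$, because $x+1$ is the unique monic irreducible over $\mathbb{F}_2$ having $1$ as a root), so $\det(C_{f_i} - I) = f_i(1) \neq 0$ and the block contributes $0$. For a block with $f_i = (x+1)^{d_i}$, I would use the module structure: the cyclic space on which $C_{f_i}$ acts is isomorphic to $\mathbb{F}_2[x]/(f_i)$, and eigenvectors for eigenvalue $1$ correspond to elements annihilated by $(x+1)$, i.e.\ to multiples of $(x+1)^{d_i - 1}$ in the quotient; this set is one-dimensional. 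Equivalently, the companion matrix is non-derogatory (minimal polynomial equals characteristic polynomial), so eigenvalue $1$ has a single Jordan block and hence geometric multiplicity $1$. Thus each $(x+1)^{d_i}$ block contributes exactly $1$.

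Summing, $\lambda$ equals the number of blocks whose elementary divisor has the form $(x+1)^{d_i}$, which is precisely the quantity $c$ of Theorem~\ref{thm:cnotcontrolled}. The only slightly delicate step is the one-dimensionality of the eigenspace of a single $(x+1)^{d}$ block; everything else is routine linear algebra over $\mathbb{F}_2$. I expect that to be the main point to get right, and the module-theoretic identification of the block with $\mathbb{F}_2[x]/\bigl((x+1)^d\bigr)$ is the cleanest way to handle it.
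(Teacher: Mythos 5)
Your proof is correct, and its skeleton matches the paper's: reduce to the generalized Jordan blocks, show a block contributes nothing to the eigenvalue-$1$ eigenspace unless its elementary divisor is a power of $x+1$, and show each such block contributes exactly one dimension. Where you differ is in how you justify the two per-block facts. For the coprime case, the paper invokes Bezout's identity to exhibit an explicit inverse of $C_{f_i}+I$, whereas you evaluate the characteristic polynomial, $\det(C_{f_i}-I)=f_i(1)=q_i(1)^{k_i}\neq 0$; both are fine, and yours is shorter. For the $(x+1)^{d}$ case, the paper does an explicit row reduction of $C_f+I$ to exhibit rank $d-1$, whereas you identify the block with $\mathbb{F}_2[x]/\bigl((x+1)^{d}\bigr)$ and note that the annihilator of $x+1$ is the one-dimensional ideal generated by $(x+1)^{d-1}$ (equivalently, that companion matrices are non-derogatory, so every eigenvalue has geometric multiplicity one). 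Your version is more abstract but arguably cleaner, and working with eigenspace dimensions throughout lets you absorb the paper's final remark about linear independence into routine dimension counting. No gaps.
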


\begin{proof}
    Clearly, the eigenvectors of $M$ correspond directly to the eigenvectors of its generalized Jordan normal form up to a change of basis. Consider the normal form of $M$ block by block. First, note that any block $C_f$ must be invertible, since $M$ is invertible. Therefore, the only eigenvector with eigenvalue zero is the zero vector, which does not contribute to $\lambda$. To find the eigenvectors with eigenvalue one, let $C_f$ be a block with minimal polynomial $f(x)$. 
    
    Suppose that $(x + 1) \nmid f(x)$, then we must have $\gcd(x + 1, f(x)) = 1$ since $x + 1$ is irreducible. Therefore, by Bezout's lemma, there exists polynomials $a(x)$ and $b(x)$ such that $a(x)(x + 1) + b(x)f(x) = 1$, which implies that $a(C_f)(C_f + I) + b(C_f)f(C_f) = I$. But since $f(C_f) = 0$ by definition, we have $a(C_f)(C_f + I) = I$ so $C_f + I$ is invertible. Hence, the null space of $C_f + I$ has dimension zero, and so $C_f$ cannot have any eigenvectors with eigenvalue one. 
    
    Conversely, if $(x + 1) \mid f(x)$ then we must have $f(1) = f_0 \oplus f_1 \oplus \cdots \oplus f_{d-1} \oplus 1 = 0$ by the factor theorem. Now consider performing row operations on the matrix $C_f + I$. In particular, adding row 1 to row 2, row 2 to row 3, etc., and row $d - 1$ to row $d$, we perform the following transformation:
    $$ \begin{pmatrix}
        1 &   &        &   & f_0\\
        1 & 1 &        &   & f_1\\
          & 1 & \ddots &   & \vdots\\
          &   & \ddots & 1 & f_{d - 2} \\
          &   &        & 1 & f_{d - 1} + 1
    \end{pmatrix} \longrightarrow \begin{pmatrix}
        1 &   &        &   & f_0\\
        0 & 1 &        &   & f_0 \oplus f_1\\
          & 0 & \ddots &   & \vdots\\
          &   & \ddots & 1 & \bigoplus_{i=0}^{d-2} f_{i} \\
          &   &        & 0 & \bigoplus_{i=0}^{d-1} f_{i} + 1
    \end{pmatrix} = \begin{pmatrix}
        1 &   &        &   & f_0\\
        0 & 1 &        &   & f_0 \oplus f_1\\
          & 0 & \ddots &   & \vdots\\
          &   & \ddots & 1 & \bigoplus_{i=0}^{d-2} f_{i} \\
          &   &        & 0 & 0
    \end{pmatrix}$$
    The final matrix clearly has rank $d - 1$, implying the null space of $C_f + I$ has dimension one, and so exactly one eigenvector with eigenvalue one. 
    
    Therefore, each block $C_f$ of $M$ has either zero eigenvectors or exactly one in the case that $f(x) = (x + 1)^d$. By padding each of these eigenvectors to the full width of $M$ and performing a change of basis from the Jordan normal form, we obtain exactly $\lambda = c$ eigenvectors of $M$. Linear independence follows from the fact that they are block diagonal in the normal form basis.
\end{proof}

Combining this with the results of \cite{jiang2023lower}, we can prove that Theorem \ref{thm:cnotcontrolled} is approximately optimal:

\begin{theorem}
    \label{thm:cnotoptimal}
    Theorem \ref{thm:cnotcontrolled} is approximately optimal within a factor of $\frac{3}{2}$.
\end{theorem}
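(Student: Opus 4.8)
The plan is to establish a lower bound on the Toffoli count of any circuit implementing $c(C)$ and compare it to the $n - c$ upper bound from Theorem~\ref{thm:cnotcontrolled}. The tool is the unitary stabilizer nullity from \cite{jiang2023lower}, which lower-bounds the number of non-Clifford gates in terms of how many generalized stabilizers a unitary has. First I would recall the precise statement of the nullity lower bound: the number of non-Clifford gates needed to implement an $(n+1)$-qubit unitary $U$ is bounded below by some function of $\log_2$ of its number of generalized stabilizers, with the maximal $4^{n+1}$ stabilizers corresponding to Cliffords (which need zero). The key input is Lemma~\ref{thm:eigenstabs}, which tells us $c(C)$ has exactly $2^{2\lambda + 1}$ generalized stabilizers, and Lemma~\ref{thm:eigenblocks}, which identifies $\lambda = c$.

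Next I would translate the stabilizer count into a Toffoli lower bound. Since each Toffoli gate is (in the relevant metric) a fixed number of $T$-gates, and the nullity bound typically reads as ``at least $\tfrac{1}{k}(\text{total qubits} - \log_2(\#\text{stabilizers})/2)$ non-Clifford resources'' for an appropriate constant, plugging in $n+1$ qubits and $2^{2\lambda+1} = 2^{2c+1}$ stabilizers should yield a lower bound of the form $\tfrac{2}{3}(n - c)$ Toffoli gates (the factor chosen so that the ratio with $n-c$ works out to $\tfrac{3}{2}$). Concretely, the optimum Toffoli count is at least $\tfrac{2}{3}(n-c)$, while Theorem~\ref{thm:cnotcontrolled} achieves $n - c$, giving the claimed factor of $\tfrac{3}{2}$.

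The main obstacle I anticipate is matching the exact constants in the nullity-to-Toffoli conversion. The results of \cite{jiang2023lower} are stated for $T$-gate count (or for general non-Clifford resources), so I must carefully account for the fact that one Toffoli gate contributes a bounded amount of stabilizer nullity, and that the nullity is subadditive under composition. I would argue that a single Toffoli gate reduces the log-stabilizer-count by at most a fixed amount (so $m$ Toffolis can produce at most that much nullity), and then invert this relationship: since $c(C)$ has nullity $2(n+1) - (2c+1) = 2(n-c) + 1$ (measuring nullity as $\log_2$ of the maximal stabilizer count minus $\log_2$ of the actual count), and each Toffoli contributes at most nullity $3$ in the relevant normalization, we need at least $\tfrac{2(n-c)+1}{3}$ Toffolis. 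Comparing $\tfrac{2(n-c)+1}{3}$ against the achieved $n - c$ gives a ratio approaching $\tfrac{3}{2}$ as $n - c$ grows, establishing approximate optimality within the claimed factor.
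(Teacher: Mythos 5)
Your proposal is correct and follows essentially the same route as the paper: it invokes Lemmas \ref{thm:eigenstabs} and \ref{thm:eigenblocks} to get $2^{2c+1}$ generalized stabilizers, computes the unitary stabilizer nullity $2(n+1)-(2c+1)=2(n-c)+1$, and divides by the nullity of a Toffoli (which is $3$) to obtain the lower bound $\tfrac{2(n-c)+1}{3}\geq\tfrac{2}{3}(n-c)$, exactly as in the paper's proof via \cite[Theorem 13]{jiang2023lower}. The only cosmetic difference is that you phrase the final ratio as ``approaching $\tfrac{3}{2}$,'' whereas the paper simply drops the $+\tfrac{1}{2}$ term to get the clean inequality $n-c\leq\tfrac{3}{2}\operatorname{Toffoli-Count}(c(C))$; both are valid.
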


\begin{proof}
    Suppose we are given an $n$-qubit CNOT circuit $C$, then $c(C)$ is a circuit on $n + 1$ qubits. From Lemma \ref{thm:eigenstabs}, the unitary stabilizer nullity \cite[Definition 5]{jiang2023lower} of $c(C)$ can be calculated as
    $$ v(c(C)) = 2(n + 1) - \log_2(2^{2\lambda + 1}) = 2\left(n - \lambda + \frac{1}{2}\right) $$
    and hence from \cite[Theorem 13]{jiang2023lower} the number of Toffoli gates required to synthesize $c(C)$ is bounded below by:
    $$ \operatorname{Toffoli-Count}(c(C)) \geq  \frac{v(c(C))}{v(CCZ)} = \frac{2}{3}\left(n - \lambda + \frac{1}{2}\right) \geq \frac{2}{3}(n - \lambda)$$
    On the other hand Theorem \ref{thm:cnotcontrolled} gives a construction of $c(C)$ using $n - c$ Toffoli gates, and from Lemma \ref{thm:eigenblocks} we have:
    \[ n - c = n - \lambda = \frac{3}{2}\cdot\frac{2}{3}(n - \lambda) \leq \frac{3}{2}\operatorname{Toffoli-Count}(c(C)) \qedhere \]
\end{proof}

We can also use the above bound to provide looser bounds on Theorems \ref{thm:ctrlcnotlowdepth} and \ref{thm:cnotoptimaldepth}, by exploiting the fact that $c \leq \frac{n}{2}$ for any matrix that does not have $C_1$ blocks in its generalized Jordan form. Informally, this says that if the optimal Toffoli count required to control a CNOT circuit is $o(n)$, then there is a basis in which the circuit acts trivially on $\Omega(n)$ qubits.

\begin{theorem}
    \label{thm:depthoptimalapprox}
    The Toffoli-count of the constructions in Theorems \ref{thm:ctrlcnotlowdepth} and \ref{thm:cnotoptimaldepth} are approximately optimal within factors of $6$ and $3$ respectively.
\end{theorem}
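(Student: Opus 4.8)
The plan is to combine the stabilizer-nullity lower bound already established in the proof of Theorem~\ref{thm:cnotoptimal} with an exact accounting of the two constructions in the normal-form basis, the bridge being a single combinatorial inequality relating $c$ and $n$. Recall from that proof that for any $n$-qubit CNOT circuit $C$,
$$ \operatorname{Toffoli-Count}(c(C)) \ge \tfrac{2}{3}(n - c), $$
where, by Lemmas~\ref{thm:eigenstabs} and~\ref{thm:eigenblocks}, $c$ is the number of generalized Jordan blocks of the parity matrix $A$ with polynomial $(x+1)^{d}$. So it suffices to bound each construction's Toffoli count by a constant multiple of $\tfrac{2}{3}(n-c)$.

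The key inequality is $n - c_1 \le 2(n - c)$, where $c_1$ is the number of $C_1$ blocks (the $1\times 1$ identity blocks $C_{x+1}=[1]$) in the normal form of $A$. First I would prove this. By Lemma~\ref{thm:eigenblocks} each of the $c$ eigenvalue-one eigenvectors arises from a distinct $(x+1)^{d_i}$ block; of these, $c_1$ have dimension $1$ and the remaining $c - c_1$ have dimension at least $2$. Summing block dimensions gives $n \ge c_1 + 2(c - c_1) = 2c - c_1$, which rearranges to $n - c_1 \le 2(n - c)$. Equivalently, after deleting the $c_1$ trivial qubits the reduced matrix has no $C_1$ blocks, so $c - c_1 \le (n - c_1)/2$, which is exactly the fact $c \le n/2$ in the $C_1$-free case foreshadowed before the statement.

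Next I would pin down the exact Toffoli counts in the normal-form basis, noting that the change of basis is a pair of CNOT circuits and hence Clifford, so it affects neither the Toffoli count nor the $T$-depth nor the reaction depth. For Theorem~\ref{thm:cnotoptimaldepth}, the ``small addition'' is precisely to conjugate into this basis before running the construction: there $D = A + I$ is block diagonal, each $C_1$ block contributes a zero row (so no Toffoli), while every block of dimension $d_i \ge 2$ has all of its rows nonzero, since each row of $C_f + I$ carries a $1$ on the subdiagonal (rows $2,\dots,d$) or on the main diagonal (row $1$). As $c(\overline{\text{CNOT}})$ needs one Toffoli per nonzero row of $D$, the count is $n - c_1$. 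For Theorem~\ref{thm:ctrlcnotlowdepth}, each nontrivial block of dimension $d_i$ costs $2(d_i-1)$ Toffolis for its controlled cyclic shift (Lemma~\ref{thm:shiftdepth}) plus $2$ for its final toggle-detected Toffoli (Lemma~\ref{thm:togglehermitian}), totaling $2 d_i$; summing over nontrivial blocks yields $2(n - c_1)$.

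Combining the pieces, for Theorem~\ref{thm:cnotoptimaldepth} the count $n - c_1 \le 2(n-c) = 3 \cdot \tfrac{2}{3}(n-c)$ is within a factor $3$ of optimal, and for Theorem~\ref{thm:ctrlcnotlowdepth} the count $2(n-c_1) \le 4(n-c) = 6\cdot \tfrac{2}{3}(n-c)$ is within a factor $6$. I expect the main obstacle to be careful bookkeeping rather than any deep idea: specifically, verifying that every block of dimension at least $2$ contributes all $d_i$ of its rows, not merely $\operatorname{rank}(C_f+I) = d_i - 1$, since it is exactly this one-extra-row-per-$(x+1)^d$-block slack that separates the achievable constant here from the factor-$\tfrac{3}{2}$ optimum of Theorem~\ref{thm:cnotcontrolled}. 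The single block $C_{(x+1)^2}$ (with $n=2$, $c=1$, $c_1=0$) already saturates both inequalities, showing the factors $3$ and $6$ are tight against the $\tfrac{2}{3}(n-c)$ bound.
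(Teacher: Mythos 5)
Your proposal is correct and follows essentially the same route as the paper: the stabilizer-nullity lower bound $\frac{2}{3}(n-c)$ from Theorem \ref{thm:cnotoptimal}, the dimension-counting inequality $n - c' \leq 2(n-c)$ (your $n \geq 2c - c_1$ is the same bound rearranged), the exact counts $2(n-c')$ and $n-c'$ for the two constructions, and the modification of Theorem \ref{thm:cnotoptimaldepth} by restricting to the non-identity blocks in the normal-form basis. Your explicit verification that every row of $C_f + I$ is nonzero for $d \geq 2$ is a slightly more detailed accounting than the paper's, but the argument is the same.
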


\begin{proof}
    Given an $n\times n$ matrix $A$, let $c$ and $c'$ be the number of blocks of the form $C_{(x+1)^d}$ and $C_1$, respectively, in the generalized Jordan form of $A$ (as in Theorem \ref{thm:cnotcontrolled}). Since $(x+1)^0 = 1$, we must have $c \geq c'$. On the other hand, for $d \geq 1$ the size of a block $C_{(x+1)^d}$ is at least $2\times 2$, and so there can be at most $\frac{n - c'}{2}$ of them in the generalized Jordan form of $A$. Hence:
    \begin{align*}
        c &\leq c' + \frac{n - c'}{2} \\
        n - c &\geq n - c' - \frac{n - c'}{2} = \frac{n - c'}{2} \\
        n - c' &\leq 2(n - c)
    \end{align*}
    In Theorem \ref{thm:ctrlcnotlowdepth}, the Toffoli count is exactly $2(n - c') \leq 4(n - c)$, and so by Theorem \ref{thm:cnotoptimal} this is approximately optimal within a factor of $4 \cdot \frac{3}{2} = 6$.

    For Theorem \ref{thm:cnotoptimaldepth}, note that instead of directly synthesizing $c(U)$ where $U$ is the unitary corresponding to parity matrix $A$, we can instead write the generalized Jordan form $A = SJS^{-1}$ and partition $J$ as $J = J_{n-c'} \oplus C_1^{\oplus c'}$ where $J_{n-c'}$ is the $(n-c') \times (n-c')$ part of $J$ that contains no $C_1$ blocks. Since $C_1^{\oplus c'}$ is the $c' \times c'$ identity matrix, we can write $c(U) = U_S [c(U_{J_{n-c'}}) \otimes I]U_S^\dagger$ where $U_S$ and $U_{J_{n-c'}}$ are the CNOT circuits corresponding to $S$ and $J_{n-c'}$. Applying Theorem \ref{thm:cnotoptimaldepth} to just $c(U_{J_{n-c'}})$ has a Toffoli-count of $n - c' \leq 2(n - c)$ while keeping the Toffoli-depth and reaction depth the same. From Theorem \ref{thm:cnotoptimal}, this is approximately optimal within a factor of $2 \cdot \frac{3}{2} = 3$.
\end{proof}

\end{document}